\definecolor{string}{rgb}{0.7,0.0,0.0}
\definecolor{comment}{rgb}{0.13,0.54,0.13}
\definecolor{keyword}{rgb}{0.0,0.0,1.0}
\tikzstyle{vtx}=[circle, inner sep= 0pt, minimum size= 1.2mm, fill]
\newtheorem{te}{Theorem}[section]
\newtheorem{pro}[te]{Proposition}
\newtheorem{de}{Definition}[section]
\newtheorem{co}[te]{Corollary}
\newtheorem{lemma}[te]{Lemma}
\newtheorem{observaton}{Observation}[section]
\newcommand{\beq}{\begin{eqnarray}}
\newcommand{\eeq}{\end{eqnarray}}
\newcommand{\beqs}{\begin{eqnarray*}}
\newcommand{\eeqs}{\end{eqnarray*}}
\newcommand{\ABC}{{\rm ABC}}
\begin{document}
%
\title{Forbidden branches in trees \\ with minimal atom-bond connectivity index}
\maketitle
\begin{center}
{\large \bf Darko Dimitrov$^a$, Zhibin Du$^b$, Carlos M. da Fonseca$^{c,d}$}
\end{center}

\baselineskip=0.20in
\begin{center}
{\it $^a${Hochschule f\"ur Technik und Wirtschaft Berlin, Germany \& \\
 Faculty of Information Studies, Novo Mesto, Slovenia}}
\\E-mail: {\tt darko.dimitrov11@gmail.com}
\end{center}
\baselineskip=0.20in
\begin{center}
{\it $^b$School of Mathematics and Statistics, Zhaoqing University
\\ Zhaoqing 526061, China}
\\E-mail: {\tt zhibindu@126.com}
\end{center}
\baselineskip=0.20in
\begin{center}
{\it $^c$Department of Mathematics, Kuwait University
\\ Safat 13060, Kuwait}
\\E-mail: {\tt carlos@sci.kuniv.edu.kw}
\end{center}
\baselineskip=0.20in
\begin{center}
{\it $^d$Department of Mathematics, University of Primorska \\
Glagoljsa\v ska 8, 6000 Koper, Slovenia}
\\E-mail: {\tt carlos.dafonseca@famnit.upr.si}
\end{center}

\baselineskip=0.20in
\vspace{6mm}
\begin{abstract}
The atom-bond connectivity (ABC) index has been, in recent years, one of the most
actively studied vertex-degree-based graph invariants in chemical
graph theory. For a given graph $G$, the ABC index is defined as
$\sum_{uv\in E}\sqrt{\frac{d(u) +d(v)-2}{d(u)d(v)}}$, where
$d(u)$ is the degree of vertex $u$ in $G$ and $E(G)$ denotes the set
of edges of $G$. In this paper we present some new structural properties
of trees with a minimal ABC index (also refer to as a minimal-ABC tree), which is a step further towards understanding
their complete characterization. We show that a minimal-ABC tree
cannot simultaneously contain  a $B_4$-branch and $B_1$ or
$B_2$-branches.
\end{abstract}
{\small \hspace{0.25cm} \textbf{Keywords:} Atom-bond connectivity
index; tree; extremal graphs}
%
%
\medskip
%
%
%
%
\section[Introduction and preliminaries]{Introduction and preliminaries}

Let $G=(V, E)$ be a simple undirected graph with $n$ vertices.
For $v \in V$, the degree of $v$, denoted by $d(v)$, is the
number of edges incident to $v$. In 1998, Estrada, Torres, Rodr{\' i}guez and Gutman
\cite{etrg-abc-98} proposed a vertex-degree-based graph
topological index - {\em the atom-bond connectivity (ABC) index} -
defined as
\beq \ABC(G)=\sum_{uv\in E} f(d(u), d(v)), \nonumber
\eeq where
\beq \label{eqn:001}
f(d(u), d(v))=\sqrt{\frac{d(u)
+d(v)-2}{d(u)d(v)}}. \noindent
\eeq
It was shown that the ABC index
can be a valuable predictive tool in the study of heat
formation in alkanes. Ten years later, Estrada elaborated in
\cite{e-abceba-08} an innovative quantum-theory-like explanation of
this topological index.
Incontestably, this topic has triggered
tremendous interest in both mathematical
and chemical research communities, leading to a  number of
results that incorporate the structural properties and the computational aspects of the graphs with
extremal properties \cite{adgh-dctmabci-14, ahs-tmabci-13, ahz-ltmabci-13, cg-eabcig-11,
clg-subabcig-12, dmga-cbabcig-16, d-abcig-10, d-sptmabci-17, d-etrabci-17, ddf-sptmabci-3-2015, dis-rmabciggp-17, dgf-abci-11, dgf-abci-12,
d-abcircg-2015, fgv-abcit-09, gf-tsabci-12, ghl-srabcig-11, gs-sabcitnpv-16, ftvzag-siabcigo-2011, gfi-ntmabci-12, gfahsz-abcic-2013,
lcmzczj-twmabciatgnl-16, llgw-pcgctmabci-13, vh-mabcict-2012,
xz-etfdsabci-2012, xzd-abcicg-2011, xzd-frabcit-2010,
zc-rbabcfgai-2015, fgiv-cstmabci-12,
d-ectmabci-2013, lccglc-fcstmaibtds-14, lmccgc-pstmabci-15}. On the other hand, the physico-chemical
applicability of the ABC index has also been confirmed and extended in
several other studies \cite{cll-abcbsp-13, gg-nwabci-10, gtrm-abcica-12, k-abcibsfc-12,
yxc-abcbsp-11}.

It has been proven that deleting/adding an edge in a graph strictly
decreases/increases its ABC index \cite{cg-eabcig-11,dgf-abci-11}.
Consequently, among all connected graphs, a tree/the complete graph
has minimal/maximal ABC index.

It has been shown that among the trees of a given order, the star is
the one with maximal ABC index \cite{fgv-abcit-09}. Notwithstanding,
a thoroughgoing characterization of trees with minimal ABC index,
also referred to as minimal-ABC trees, still remains an open problem. This
paper represents a step further towards the comprehensive classification of
such trees.

In the sequel, we present some additional results and main notations
that will be used throughout the paper. A vertex of degree one is a
{\it pendant vertex}.
As in
\cite{gfi-ntmabci-12}, a sequence of vertices of a graph $G$,
$S_k=v_0 \, v_1 \cdots v_k$,  will be  called a {\it pendant path} if
each two consecutive vertices in $S_k$ are adjacent in $G$,
$d(v_0)>2$, $d(v_i)=2$, for $i=1,\dots, k-1$, and $d(v_k)=1$. The
length of the pendant path $S_k$ is $k$. If $d(v_k) > 2$, then $S_k$
is called an {\em internal path} of length $k$.

A $B_1$-branch is a path of length $2$ attached to a vertex that has at least one child of degree at least 3. We call the vertex of degree $2$ in a $B_1$-branch as the center of such $B_1$-branch. A $B_k$-branch, for $k \geq
2$, is a (sub)graph comprised of vertex $v$ of degree $k+1$ and
$k$ pendant paths of length $2$ that all have $v$ as a common
vertex. We call the vertex $v$ also the {\em center} of the $B_k$-branch.
Moreover, a $B_k^*$-branch, for $k \geq
1$, is a (sub)graph obtainable from $B_k$ by attaching an additional vertex to a pendant vertex of $B_k$-branch.



A $k$-{\em terminal vertex} of a rooted tree is a vertex of degree $k+1 \ge 3$, which is a parent of only $B_{\geq 1}$-branches,
such that  at least one branch among them is a $B_1$-branch (or $B^*_1$-branch).
The (sub)tree, induced by a  $k$-{\em terminal vertex} and all its (direct and indirect) children vertices, is called a $k$-{\em terminal branch} or
$T_k$-{\em branch}.

A sequence $D=(d_1, d_2, \dots, d_n)$ is {\it graphical} if there is
a graph whose vertex degrees are $d_i$, $i=1,\dots,n$. Additionally,
if $d_1 \geq d_2\geq \dots \geq d_n$, then  $D$ is called a {\it degree sequence}.

To determine the minimal-ABC trees with an order less than $10$ is a
simple task, therefore to simplify the exposition in the rest of the paper,
we assume that all the trees are of an order at least $10$.

In $2008$, Wang \cite{w-etwgdsri-2008} defined a {\em greedy tree}
as follows.

\begin{de}\label{def-GT}
Suppose the degrees of the non-leaf vertices are given, the greedy
tree is achieved by the following `greedy algorithm':
\begin{enumerate}
\item Label the vertex with the largest degree as $v$ (the root).
\item Label the neighbors of $v$ as $v_1, v_2,\dots,$ assign the largest degrees available to them such that $d(v_1) \geq d(v_2) \geq
\cdots$.
\item Label the neighbors of $v_1$ (except $v$) as $v_{11}, v_{12}, \dots$ such that they take all the largest
degrees available and that $d(v_{11}) \geq d(v_{12}) \geq\cdots$
then do the same for $v_2, v_3,\dots$.
\item Continue the labeling of the unlabeled vertices in the same manner as in 3.,
always starting with the neighbors of the labeled vertex with the largest degree.
\end{enumerate}
\end{de}


Next we present  the so-called {\em switching transformation}
explicitly stated by Lin, Gao, Chen, and Lin
\cite{lgcl-mabcicggds-13}. The switching transformation was used in
the proofs of some characterizations of the minimal-ABC trees, as it
was the case with Lemma \ref{lemma-DS} by  Gan, Liu, and You in
\cite{gly-abctgds-12}.

\begin{lemma}[Switching transformation]\label{lemma-switching}
Let $G=(V,E)$ be a connected graph
with $uv, xy \in E(G)$ and $uy, xv \notin E(G)$.
Let $G_1= G - uv -xy + uy + xv$. If $d(u) \geq d(x)$ and $d(v) \leq d(y)$, then
$\ABC(G_1) \leq \ABC(G)$, with the equality if and only if $d(u) = d(x)$ or $d(v) =d(y)$.
\end{lemma}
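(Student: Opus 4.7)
The plan is to exploit the fact that the switching transformation preserves the degree of every vertex: each of $u, v, x, y$ trades one incident edge for another, while all other vertices are untouched. Consequently, the contributions to $\ABC(G)$ and $\ABC(G_1)$ coming from edges outside $\{uv, xy\}$ (resp.\ $\{uy, xv\}$) cancel, so the full statement collapses to the purely numerical inequality
\[
f(a,c) + f(b,d) \;\geq\; f(a,d) + f(b,c),
\]
where $a = d(u)$, $b = d(x)$, $c = d(v)$, $d = d(y)$ satisfy $a \geq b$ and $c \leq d$, the equality cases being exactly $a = b$ or $c = d$.

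To handle this two-variable inequality uniformly I would change variables via $s = 1/x$, $t = 1/y$ and work with
\[
\phi(s,t) \;:=\; f(x,y) \;=\; \sqrt{\,s + t - 2st\,},
\]
so that the statement reads $\phi(s_1,t_1) + \phi(s_2,t_2) \geq \phi(s_1,t_2) + \phi(s_2,t_1)$ for $s_1 = 1/a \leq 1/b = s_2$ and $t_1 = 1/c \geq 1/d = t_2$. This is the classical supermodularity-type inequality and follows at once from $\partial^2 \phi/(\partial s\,\partial t) \leq 0$ by integrating over the rectangle $[s_1,s_2] \times [t_2,t_1]$. A direct computation starting from $\phi_s = (1 - 2t)/(2\phi)$ yields
\[
\frac{\partial^2 \phi}{\partial s\,\partial t} \;=\; -\,\frac{1 + 2s + 2t - 4st}{4\,\phi^{3}},
\]
and the positivity of the numerator on $(0,1]^2$ follows from the elementary bound $2st \leq s + t$ valid whenever $s,t \in (0,1]$ (equivalently $x, y \geq 1$), which forces $1 + 2s + 2t - 4st \geq 1 > 0$.

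The one mildly delicate point, and the part I would treat with care, is the boundary behaviour where $\phi$ vanishes, occurring only at $s = t = 1$, i.e.\ an edge joining two degree-one vertices. Such an edge would constitute an isolated component of $G$, which is incompatible with both connectedness of $G$ and the presence of the second edge $xy$; hence this configuration never arises in the context of the lemma. Away from it the mixed partial is strictly negative, so the integrated inequality is strict unless the integration rectangle has zero area, which happens exactly when $s_1 = s_2$ or $t_1 = t_2$, i.e.\ when $a = b$ or $c = d$. This delivers the claimed equality characterization and completes the plan.
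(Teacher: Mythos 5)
Your overall route is sound and, in fact, is the standard one: the switch preserves every vertex degree, so the lemma does reduce to $f(a,c)+f(b,d)\ge f(a,d)+f(b,c)$ for $a\ge b$, $c\le d$; your substitution $\phi(s,t)=\sqrt{s+t-2st}$, the formula $\phi_{st}=-(1+2s+2t-4st)/(4\phi^3)$, and the positivity of the numerator on $(0,1]^2$ are all correct, and integrating the signed mixed partial over the rectangle is a legitimate way to get both the inequality and the equality characterization. (Note that the paper itself does not prove this lemma --- it quotes it from Lin, Gao, Chen and Lin --- but its technical Lemmas~\ref{appendix-pro-010} and \ref{appendix-pro-010-2} are exactly the discrete form of your negative mixed partial, so your argument is in the same spirit as what is used in this literature.)

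The genuine flaw is in your treatment of the one point you yourself flag as delicate. The vanishing point $s=t=1$ can only enter your rectangle as the corner $(s_2,t_1)=(1/d(x),1/d(v))$, and this corner corresponds to the pair $(d(x),d(v))$, i.e.\ to the \emph{new} edge $xv$ of $G_1$, not to an edge of $G$; connectedness of $G$ therefore does not exclude it. Concretely, for the path $v\,u\,w\,y\,x$ on five vertices one has $uv,xy\in E(G)$, $uy,xv\notin E(G)$, $d(u)=2\ge d(x)=1$ and $d(v)=1\le d(y)=2$, yet $d(x)=d(v)=1$, so $(1,1)$ is a corner of your integration rectangle, where $\phi=0$ and $\phi_s$, $\phi_{st}$ blow up; the clean ``integrate $\phi_{st}$ over the closed rectangle'' step is then not literally available, and your stated reason for ignoring this case (``the configuration never arises'') is false. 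The gap is repairable: either integrate over $[s_1,s_2-\epsilon]\times[t_2,t_1-\epsilon]$ and let $\epsilon\to 0$, observing that $-\phi_{st}>0$ makes the integrals monotone in the domain (so strictness survives; the singularity is of the integrable order $\bigl((1-s)+(1-t)\bigr)^{-3/2}$), or settle the corner case $d(x)=d(v)=1$ by hand, where the claim becomes $f(a,1)+f(1,d)\ge f(a,d)$, i.e.\ $\sqrt{(a-1)/a}+\sqrt{(d-1)/d}\ge\sqrt{(a+d-2)/(ad)}$, an easy verification. As written, however, the only nontrivial boundary case of your proof is dismissed on incorrect grounds, so this step needs to be fixed before the argument is complete.
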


\begin{lemma}\label{lemma-DS}
In a minimal-ABC tree, every path $v_0 v_1 \cdots v_t v_{t+1}$, where $v_0$ and $v_{t+1}$ are leaves, has the properties:
\begin{enumerate}
\item if  $t$ is odd, then
$$
d(v_1) \leq d(v_t) \leq d(v_2) \leq  d(v_{t-1}) \leq \cdots \leq d(v_{\frac{t-1}{2}}) \leq d(v_{\frac{t+3}{2}}) \leq d(v_{\frac{t+1}{2}});
$$
\item if  $t$ is even, then
$$
d(v_1) \leq d(v_t) \leq d(v_2) \leq  d(v_{t-1}) \leq \cdots \leq d(v_{\frac{t+4}{2}}) \leq d(v_{\frac{t}{2}}) \leq d(v_{\frac{t+2}{2}}).
$$
\end{enumerate}
\end{lemma}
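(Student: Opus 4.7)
The plan is to derive the zigzag chain by iterated application of the switching transformation (Lemma~\ref{lemma-switching}), peeling off one pair of vertices from each end of the path at every step. The driving principle is that, in a minimal-ABC tree, no switch may strictly decrease the ABC index, so whenever the hypothesis of Lemma~\ref{lemma-switching} can be made to hold strictly, a contradiction follows.

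\textbf{Outer layer.} First I would show that $d(v_1)$ and $d(v_t)$ are bounded above by every interior $d(v_i)$ with $2\leq i\leq t-1$. Suppose $d(v_1)>d(v_i)$ for some such $i$, and perform the switch that removes $v_0v_1$ and $v_iv_{i+1}$ and inserts $v_0v_i$ and $v_1v_{i+1}$. Deleting the two edges splits $G$ into exactly three components, namely $\{v_0\}$, a middle one containing $v_1,\dots,v_i$ together with their side subtrees, and a right one containing $v_{i+1},\dots,v_{t+1}$ together with their side subtrees; the two new edges reconnect these uniquely, producing a tree $G_1$. Applying Lemma~\ref{lemma-switching} with $u=v_1$, $v=v_0$, $x=v_i$, $y=v_{i+1}$, the hypotheses $d(u)\geq d(x)$ and $d(v)=1\leq d(y)$ hold; since $d(v_0)=1<d(v_{i+1})$ (as $v_{i+1}$ is internal for $i\leq t-1$) and $d(v_1)>d(v_i)$, the lemma yields $\ABC(G_1)<\ABC(G)$, contradicting minimality. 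A mirror switch from the $v_{t+1}$ end gives $d(v_t)\leq d(v_j)$ for $2\leq j\leq t-1$. After relabeling the path so that $d(v_1)\leq d(v_t)$, we obtain the initial chain $d(v_1)\leq d(v_t)\leq d(v_2)$.

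\textbf{Descent.} I would then iterate the argument on the subpath $v_1v_2\cdots v_t$, using $v_1v_2$ as the new outer edge in place of $v_0v_1$. For $3\leq i\leq t-1$, the switch that removes $v_1v_2,\,v_iv_{i+1}$ and inserts $v_1v_i,\,v_2v_{i+1}$ again produces a tree by the same three-component analysis. Applying Lemma~\ref{lemma-switching} with $u=v_2$, $v=v_1$, $x=v_i$, $y=v_{i+1}$, together with the already-established $d(v_1)\leq d(v_{i+1})$, yields $d(v_2)\leq d(v_i)$; in particular $d(v_2)\leq d(v_{t-1})$. A mirrored switch from the $v_{t+1}$ end gives $d(v_{t-1})\leq d(v_j)$ for $3\leq j\leq t-2$, extending the chain to $d(v_t)\leq d(v_2)\leq d(v_{t-1})\leq d(v_3)$. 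Continuing this peeling procedure exhausts the path from both ends and produces the full zigzag, the tail of which depends on the parity of $t$ as stated.

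\textbf{Main obstacle.} The delicate point is strictness in Lemma~\ref{lemma-switching}: a strict decrease requires both $d(u)>d(x)$ and $d(v)<d(y)$. At the outermost layer this is automatic since $d(v_0)=1$ is strictly smaller than any internal degree, but at deeper layers the strict form of the inherited inequality (such as $d(v_1)<d(v_{i+1})$) must be maintained. In the degenerate case $d(v_1)=d(v_{i+1})$, the target inequality is either trivial, because the previously peeled-off chain forces many interior degrees to collapse to the common minimum, or must be recovered via a companion switch routed through a non-pendant, non-path neighbor of the current boundary vertex. Verifying treeness of each switched graph is routine once the three-component splitting is recognized, and the odd/even split in the statement merely records whether the final link of the chain lands on a single central vertex or spans an adjacent pair.
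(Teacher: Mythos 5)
Your overall strategy (iterated use of the switching transformation along the leaf-to-leaf path) is the right family of ideas: the paper itself does not prove Lemma~\ref{lemma-DS} but imports it from Gan, Liu and You \cite{gly-abctgds-12}, explicitly noting that the switching transformation of Lemma~\ref{lemma-switching} is the tool behind it. Your outer layer is correct as written: pairing the pendant edge $v_0v_1$ with $v_iv_{i+1}$ keeps the graph a tree, and strictness is automatic because $d(v_0)=1<d(v_{i+1})$ and $d(v_1)>d(v_i)$ is assumed, so both equality conditions of Lemma~\ref{lemma-switching} fail and minimality is contradicted.

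The genuine gap is in the descent, and it is not a removable technicality. With $u=v_2$, $v=v_1$, $x=v_i$, $y=v_{i+1}$, Lemma~\ref{lemma-switching} yields a \emph{strict} decrease only if $d(v_1)<d(v_{i+1})$; when $d(v_1)=d(v_{i+1})$ you only get $\ABC(G_1)\le \ABC(G)$, hence no contradiction, and $d(v_2)\le d(v_i)$ is not established. Your two escape routes do not close this. Consider a path with $t=4$ and degrees $(d(v_1),d(v_2),d(v_3),d(v_4))=(3,4,3,3)$: it is consistent with everything the outer layer proves, the tie $d(v_1)=d(v_4)$ occurs, the interior degrees do not ``collapse to the common minimum'' (there is a $4$ in the middle), and the inequality your descent tries to prove in this orientation, $d(v_2)\le d(v_3)$, is simply false --- yet the zigzag of the lemma is satisfied after reversing the path, since $(3,3,4,3)$ obeys $d(v_1)\le d(v_4)\le d(v_2)\le d(v_3)$. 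So in the tie case no switch can prove the inequality for the orientation you fixed at the start; the proof must re-choose the orientation (or break ties) at each level, which your single initial relabeling does not allow, and the proposed ``companion switch routed through a non-pendant, non-path neighbour'' is left unspecified, so it cannot be checked and does not obviously exist (such a neighbour need not even be present). As it stands the argument proves only the outermost inequalities $d(v_1),d(v_t)\le d(v_i)$ for $2\le i\le t-1$; the full zigzag requires the tie-handling/induction machinery of the cited source \cite{gly-abctgds-12}.
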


Almost simultaneously, Xing and Zhou \cite{xz-etfdsabci-2012},
Gan, Liu, and You \cite{gly-abctgds-12} and Lin, Gao, Chen, and Lin
\cite{lgcl-mabcicggds-13} independently gave  the following
characterization.

\begin{te}\label{thm-DS}
Given the degree sequence, the greedy tree minimizes the ABC index.
\end{te}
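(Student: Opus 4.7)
The plan is to show that any tree $T$ realising the prescribed degree sequence $D=(d_1,\ldots,d_n)$ with $d_1\ge\cdots\ge d_n$ can be converted into the greedy tree $T_{\text{greedy}}$ by a finite sequence of switchings as in Lemma~\ref{lemma-switching}, none of which increases the ABC index; this immediately yields $\ABC(T)\ge\ABC(T_{\text{greedy}})$.

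First I would root $T$ at a vertex $v_0$ of maximum degree $d_1$ and partition the remaining vertices into BFS levels $L_1,L_2,\ldots$. The argument then proceeds by induction on the level index $i$: after finitely many non-increasing switchings one may assume that for every $j\le i-1$ the multiset of degrees on $L_j$ coincides with that of the $j$-th level of $T_{\text{greedy}}$, and that within $L_{j-1}$ the children of larger-degree parents are themselves of larger degree, matching steps 2--3 of Definition~\ref{def-GT}.

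For the inductive step, order $L_{i-1}$ as $w_1,w_2,\ldots$ with $d(w_1)\ge d(w_2)\ge\cdots$. If some child $u$ of $w_j$ satisfies $d(u)<d(y)$ for a child $y$ of $w_k$ with $k>j$, I replace the edges $w_ju$ and $w_ky$ by $w_jy$ and $w_ku$. The hypotheses of Lemma~\ref{lemma-switching} are met, since $d(w_j)\ge d(w_k)$ by the ordering and $d(u)\le d(y)$ by assumption; the required non-edges $w_jy$ and $w_ku$ are automatic because $u$ and $y$ lie in disjoint subtrees below distinct parents. Moreover the subtrees rooted at $u$ and $y$ move as intact blocks, so levels $0,\ldots,i-1$ remain untouched and the inductive hypothesis is preserved.

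The main technical obstacle is termination, and the passage from ``correct per-level multisets'' to actual isomorphism with $T_{\text{greedy}}$. To handle this I would introduce a monovariant, for instance the lexicographic sequence of degrees on $L_i$ read in the order induced by the greedy BFS labelling of $L_{i-1}$: each non-trivial swap strictly decreases this monovariant, and its unique minimum realises the greedy assignment on $L_i$. Once every BFS level and every parent's list of children is greedy-configured, a recursive isomorphism argument applied to the rooted subtrees hanging at each vertex of $L_1$ identifies the transformed tree with $T_{\text{greedy}}$, completing the proof.
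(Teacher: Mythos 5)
The paper does not prove Theorem~\ref{thm-DS} itself (it is quoted from the cited works of Xing and Zhou, Gan, Liu and You, and Lin, Gao, Chen and Lin), so your sketch can only be judged on its own merits, and it has a genuine gap. Every switching you allow exchanges the subtrees rooted at two children $u$ and $y$ of two parents $w_j,w_k$ lying in the \emph{same} BFS level $L_{i-1}$. Such a move never changes the depth of any vertex: the two subtrees are transplanted as blocks between vertices of equal depth, so the multiset of degrees occurring at each level is an invariant of your entire procedure. Yet your induction hypothesis asserts that these multisets can be brought to coincide with those of the greedy tree, which is false in general. Take the degree sequence $(3,3,3,1,1,1,1,1)$ and the tree in which the three degree-$3$ vertices form a path $a\,b\,c$, with two pendant edges at $a$, one at $b$ and two at $c$. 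Rooted at $a$ (a vertex of maximum degree), its level degree multisets are $\{3\}$, $\{3,1,1\}$, $\{3,1\}$, $\{1,1\}$, whereas the greedy tree has $\{3\}$, $\{3,3,1\}$, $\{1,1,1,1\}$. Already at level $1$ your inductive step is empty: $L_0$ consists of the single root, so no same-level swap exists, and the level-$1$ multisets disagree and can never be repaired by your moves. (This is a counterexample to the method, not to the theorem.)

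What is missing is exactly the substantive part of the known proofs: one must also perform exchanges between vertices at \emph{different} depths, showing via Lemma~\ref{lemma-switching} (or via path inequalities of the type recorded in Lemma~\ref{lemma-DS}) that larger degrees can be moved towards the root without increasing the ABC index, so that the depth-by-depth degree multisets of a minimizer are forced to be the greedy ones. These cross-level exchanges are delicate, because the degree comparison between the two parents involved need not go the convenient way, and handling this (typically by induction on rooted subtrees or by the zig-zag ordering along leaf-to-leaf paths) is the heart of the theorem. Your within-level rearrangement, the verification of the non-adjacency hypotheses, and the lexicographic monovariant for termination are fine as far as they go, but they only establish the weaker statement that, among trees sharing a fixed assignment of degrees to depths, the greedy assignment of children to parents prescribed by steps 2--4 of Definition~\ref{def-GT} is at least as good; they do not show that an arbitrary tree with the given degree sequence can be driven to the greedy tree by non-increasing switchings.
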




In \cite{gfi-ntmabci-12}, Gutman, Furtula, and Ivanovi{\' c}
obtained the following result.

\begin{te}\label{thm-GFI-15}
A minimal-ABC tree with $n \geq 10$ vertices contains neither
internal paths of any length $k \geq 2$ nor pendant paths of length
$k \geq 4$.
\end{te}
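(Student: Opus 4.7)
My plan is to argue by contradiction: suppose $T$ is a minimal-ABC tree on $n\ge 10$ vertices containing either an internal path of length $k\ge 2$ or a pendant path of length $k\ge 4$, and construct a tree $T'$ on the same vertex set with $\ABC(T')<\ABC(T)$. The engine of every estimate is the identity
\[
f(d,2)=\sqrt{\tfrac{d+2-2}{2d}}=\tfrac{1}{\sqrt{2}}\qquad\text{for every }d\ge 1.
\]
Hence any path whose interior vertices all have degree $2$ contributes exactly $\ell/\sqrt{2}$ to $\ABC(T)$, where $\ell$ is its number of edges, \emph{independent} of the degrees at the endpoints. Consequently, local rearrangements that preserve the ``degree-$2$ chain'' structure only affect the ABC index through the weights at the branching vertices.

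For an internal path $v_0v_1\cdots v_k$ with $k\ge 2$, I would apply a shift transformation: pick a neighbor $x\ne v_1$ of $v_0$ together with its attached subtree, delete the edge $v_0x$, and reattach by adding $v_kx$. Only $d(v_0)$ and $d(v_k)$ change (down and up by $1$, respectively), and every path edge retains weight $1/\sqrt{2}$. The net ABC change reduces to a sum over edges incident to $v_0$ and $v_k$ whose sign is governed by the monotonicity of $f(\,\cdot\,,b)$: strictly decreasing when $b\ge 3$, constant when $b=2$, strictly increasing when $b=1$. By choosing the endpoint with the larger degree as the receiver and splitting on the comparison between $d(v_0)$ and $d(v_k)$, one forces $\Delta<0$.

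For a pendant path $v_0v_1\cdots v_k$ with $k\ge 4$, I would fold the tail: delete $v_2v_3$ and add $v_0v_3$. The new tree attaches to $v_0$ one pendant path of length $2$ (via $v_1$, with $v_2$ becoming a leaf) and one of length $k-2$ (via $v_3$). All affected path edges still have weight $1/\sqrt{2}$, so
\[
\Delta=\sum_{y\in N(v_0)\setminus\{v_1\}}\bigl[f(d(v_0)+1,d(y))-f(d(v_0),d(y))\bigr],
\]
whose summands are negative for $d(y)\ge 3$, zero for $d(y)=2$, and positive for $d(y)=1$. The fold therefore closes the argument whenever $v_0$ has a neighbor of degree $\ge 3$. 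In the residual case, where every neighbor of $v_0$ other than $v_1$ has degree $\le 2$, I would switch to a secondary move: detach a pendant neighbor of $v_0$ and reattach it to $v_2$. This raises $d(v_2)$ to $3$ and lowers $d(v_0)$ by $1$ while keeping the $1/\sqrt{2}$ weights along the rest of the path, and a direct computation with the explicit values of $f$ then certifies $\Delta<0$, with the hypothesis $n\ge 10$ used to discard tiny exceptional configurations.

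The main obstacle I foresee is the case analysis at the branching vertex $v_0$ for the pendant-path claim. The clean $1/\sqrt{2}$ identity keeps the interior bookkeeping trivial, but the sign of the ABC change depends delicately on the local structure around $v_0$, so each configuration — determined by $d(v_0)$ together with the multiplicities of its degree-$1$ and degree-$2$ neighbors — must be handled by a tailored transformation whose improvement is certified by the monotonicity (and concavity in each argument) of $f$.
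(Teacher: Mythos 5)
You should first note that the paper you are being compared against does not actually prove this statement: Theorem~\ref{thm-GFI-15} is quoted there from \cite{gfi-ntmabci-12}, so your argument can only be judged on its own terms. Your bookkeeping identity $f(d,2)=1/\sqrt{2}$ and the monotonicity of $f(\cdot\,,b)$ (strictly decreasing for $b\ge 3$, constant for $b=2$, strictly increasing for $b=1$) are correct, and local degree-shifting transformations are indeed the standard engine for results of this type. The genuine gap is that the very equality case $b=2$ that makes your interior bookkeeping trivial also makes your transformations ABC-neutral in configurations you cannot exclude, so neither half of the argument actually forces $\Delta<0$.

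For the pendant path, your fold gives $\Delta=\sum_{y\in N(v_0)\setminus\{v_1\}}\bigl[f(d(v_0)+1,d(y))-f(d(v_0),d(y))\bigr]$, which vanishes identically when every neighbor of $v_0$ other than $v_1$ has degree $2$ --- for instance when $T$ is a spider all of whose legs have length at least $2$ and one leg has length at least $4$; in that situation your secondary move is also unavailable, since $v_0$ has no degree-one neighbor to detach. In fact such a spider and the spider produced by your fold have exactly the same ABC index, namely $(n-1)/\sqrt{2}$, so no rearrangement of degree-$2$ chains around $v_0$ can yield a strict decrease; eliminating this configuration requires exhibiting a tree with a \emph{different degree sequence} and strictly smaller ABC index (this is where $n\ge 10$ genuinely enters), and that comparison is missing. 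Moreover, the claim that the fold ``closes the argument whenever $v_0$ has a neighbor of degree $\ge 3$'' is false once $v_0$ also has degree-one neighbors, which you may not exclude, because that exclusion (Theorem~\ref{thm-LG-10}) is proved later and relies on the present theorem: already $d(v_0)=3$ with one neighbor of degree $3$ and one leaf gives $\Delta=\bigl(f(4,3)-f(3,3)\bigr)+\bigl(f(4,1)-f(3,1)\bigr)\approx +0.028>0$. The internal-path half has the same defect: if each endpoint's remaining neighbors all have degree $2$ (say $d(v_0)=d(v_k)=3$, each carrying two pendant paths of length $2$), every subtree shift between $v_0$ and $v_k$ touches only edges with an endpoint of degree $2$ and gives $\Delta=0$ in both directions, while with leaf neighbors present the shift toward the larger-degree endpoint can be strictly positive; so the rule ``choose the endpoint with the larger degree as the receiver'' is neither well defined in ties nor shown to give $\Delta<0$ --- it is asserted, not proved. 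The published proof handles precisely these cases with transformations that delete the degree-$2$ interior vertices and reattach them so that the degree sequence changes, combined with explicit numerical case analysis; without moves of that kind the equality configurations above cannot be broken, so as written your proof does not go through.
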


An immediate, but important, consequence of Theorem \ref{thm-GFI-15}
is the next corollary \cite{gfi-ntmabci-12, xz-etfdsabci-2012}.

\begin{co}\label{co-GFI-10}
Let $T$ be a minimal-ABC tree. Then the subgraph induced
by the vertices of $T$ whose degrees are greater than two is also a
tree.
\end{co}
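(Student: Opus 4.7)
}

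My plan is a short argument by contradiction that leverages Theorem \ref{thm-GFI-15} directly. Let $T$ be a minimal-ABC tree with $n\ge 10$, and let $H$ be the subgraph of $T$ induced by the vertex set $V_{\ge 3}=\{v\in V(T):d(v)\ge 3\}$. Acyclicity of $H$ is immediate since $H$ is a subgraph of the tree $T$, so the only thing to verify is that $H$ is connected (treating as a degenerate case the possibility that $V_{\ge 3}=\emptyset$, which cannot occur for $n\ge 10$ because a path is easily seen not to minimize the ABC index at that order).

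The main step is to suppose, for the sake of contradiction, that $H$ is disconnected, and to pick two vertices $u,v\in V_{\ge 3}$ lying in different components of $H$. Let $P:u=w_0,w_1,\ldots,w_m=v$ be the unique $u$--$v$ path in $T$. If every intermediate vertex $w_1,\ldots,w_{m-1}$ had degree at least $3$, then $P$ would lie entirely in $H$ and would connect $u$ to $v$ in $H$, contradicting our choice. Hence some $w_i$ with $1\le i\le m-1$ satisfies $d(w_i)\le 2$; but $w_i$ has at least the two neighbors $w_{i-1}$ and $w_{i+1}$ on the path, so in fact $d(w_i)=2$.

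Next I would extract a maximal consecutive run of degree-$2$ vertices along $P$. Formally, choose $j\le i\le \ell$ with $d(w_k)=2$ for all $j\le k\le \ell$, and with $d(w_{j-1})\ge 3$ and $d(w_{\ell+1})\ge 3$; the endpoints of this run are well defined because $d(w_0)=d(u)\ge 3$ and $d(w_m)=d(v)\ge 3$, so the run cannot reach either end of $P$. The subpath $w_{j-1}\,w_j\cdots w_\ell\,w_{\ell+1}$ then has endpoints of degree at least $3$, all internal vertices of degree $2$, and length $\ell-j+2\ge 2$. This is precisely an internal path of length at least $2$ in $T$, contradicting Theorem \ref{thm-GFI-15}.

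There is no real obstacle here; the only point that requires a bit of care is making sure the extracted degree-$2$ block is properly bracketed by vertices of degree $\ge 3$, which is why I insist on choosing the endpoints $u,v$ in $V_{\ge 3}$ to begin with (guaranteeing the run of degree-$2$ vertices cannot slide off the ends of $P$). Once that is set up, Theorem \ref{thm-GFI-15} finishes the argument immediately.
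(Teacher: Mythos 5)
Your argument is correct and is exactly the derivation the paper intends: Corollary \ref{co-GFI-10} is stated as an immediate consequence of Theorem \ref{thm-GFI-15}, and your extraction of a maximal degree-$2$ run bracketed by vertices of degree at least $3$ (yielding a forbidden internal path of length at least $2$) together with the trivial acyclicity observation is precisely that argument, spelled out carefully.
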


An improvement of Theorem \ref{thm-GFI-15} is the following result
by Lin, Lin, Gao, and Wu \cite{llgw-pcgctmabci-13}.

\begin{te}\label{thm-LG-10}
Each pendant vertex of a minimal-ABC tree belongs to a pendant path of length $k$, where $k = 2$ or $3$.
\end{te}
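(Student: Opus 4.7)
The plan is a proof by contradiction. Assume $T$ is a minimal-ABC tree containing a pendant vertex $w$ whose pendant path has length $k\notin\{2,3\}$. By Theorem \ref{thm-GFI-15} we already know $k\le 3$, so the only case to rule out is $k=1$, namely that $w$ is attached directly to a vertex $v$ with $p=d(v)\ge 3$. I would exhibit a local modification $T\mapsto T'$ producing an ABC-smaller tree, contradicting minimality. The driving observation is the identity $f(2,1)=f(2,2)=1/\sqrt{2}$: upgrading a leaf to a degree-$2$ vertex by appending a new pendant does not change the contribution of the edge joining it to its degree-$2$ neighbor, which suggests absorbing $w$ to extend a length-$2$ pendant path into a length-$3$ one.

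I would first catalog the neighbors of $v$. By Corollary \ref{co-GFI-10} the subgraph spanned by vertices of degree $\ge 3$ is a tree, and by Theorem \ref{thm-GFI-15} every pendant path has length $\le 3$; so each neighbor of $v$ other than $w$ is either a vertex of degree $\ge 3$ or a degree-$2$ vertex beginning a pendant path of length $2$ or $3$ at $v$. In the main case, some neighbor $z$ of $v$ has degree $2$ and starts a length-$2$ pendant path $v\,z\,y$, so $d(y)=1$. Set $T'=T-vw+yw$. Then $v\,z\,y\,w$ is a pendant path of length $3$ in $T'$, while all other edges and degrees are unchanged except that $d(v)$ drops from $p$ to $p-1$. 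The ABC difference is
\begin{equation*}
\ABC(T')-\ABC(T)=f(2,1)-f(p,1)+\sum_{a\in N(v)\setminus\{w,z\}}\bigl[f(p-1,d(a))-f(p,d(a))\bigr],
\end{equation*}
since the edge $zy$ contributes nothing as $f(2,1)=f(2,2)$. The first bracket is strictly negative, as $f(\cdot,1)$ is strictly increasing, and I would bound the sum using Lemma \ref{lemma-DS} to limit the number of high-degree siblings of $w$.

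In the remaining case, every degree-$2$ neighbor of $v$ starts a length-$3$ pendant path, or $v$ has no degree-$2$ neighbor at all. Here I would first try to reduce to the previous case by applying the switching transformation of Lemma \ref{lemma-switching} between $vw$ and an edge of a length-$3$ pendant path (turning it into a length-$2$ path and creating a new length-$2$ path at $v$), or invoke the greedy-tree characterization of Theorem \ref{thm-DS} to conclude that within the given degree sequence the current configuration cannot be optimal.

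The main obstacle is the sign control in the ABC difference above when $v$ has several neighbors of degree $\ge 3$, since the positive terms $f(p-1,d(a))-f(p,d(a))$ accumulate and may compete with the negative gain $f(2,1)-f(p,1)$. Overcoming this will require quantitative bounds on $p$ and on the number of high-degree siblings of $w$ via the monotonicity constraints of Lemma \ref{lemma-DS} along leaf-to-leaf paths through $v$.
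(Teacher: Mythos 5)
A preliminary remark: the paper itself gives no proof of this statement; it is quoted (Theorem~\ref{thm-LG-10}) from \cite{llgw-pcgctmabci-13}, so your attempt can only be assessed on its own merits, not against an in-paper argument.

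Your reduction to the case $k=1$ via Theorem~\ref{thm-GFI-15} is correct, and your switching idea does settle the subcase where a pendant path of length $3$, say $v\,a\,b\,c$, is available: exchanging $vw$ and $ab$ for $vb$ and $aw$ satisfies the hypotheses of Lemma~\ref{lemma-switching} with $d(v)\ge 3>2=d(a)$ and $1=d(w)<2=d(b)$, hence gives a strict decrease of the ABC index. The genuine gap is in your main case. The quantity you must show negative, $f(2,1)-f(p,1)+\sum_{a\in N(v)\setminus\{w,z\}}\bigl[f(p-1,d(a))-f(p,d(a))\bigr]$, is in fact sometimes positive for configurations that none of the results you invoke excludes. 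Concretely, let $p=d(v)=3$ with neighbors $w$ (leaf), $z$ (start of a length-$2$ pendant path) and one further neighbor $x$ with $d(x)=q$. Since $f(\cdot,2)$ is constant, the change equals $\bigl(\sqrt{1/2}-\sqrt{2/3}\bigr)+\bigl(\sqrt{1/2}-\sqrt{(q+1)/(3q)}\bigr)$, which is positive for every $q\ge 14$ (about $+10^{-4}$ at $q=14$, increasing to roughly $+0.02$ as $q\to\infty$); analogous failures occur for $d(v)=4,5$ when several neighbors have large degree. Your proposed rescue, Lemma~\ref{lemma-DS}, cannot close this: the obstruction is not ``many'' high-degree siblings but a single high-degree neighbor adjacent to a low-degree $v$, and the degree alternation along leaf-to-leaf paths is perfectly consistent with, say, $d(v)=3$ next to $d(x)=14$. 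So the central inequality of your main case is false as stated, and excluding the bad configurations is essentially the substance of the theorem, not something the cited lemmas hand you.

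The remaining subcase, where $v$ has no degree-$2$ neighbor at all, is also only gestured at: Theorem~\ref{thm-DS} optimizes over trees with a \emph{fixed} degree sequence, whereas your move $T-vw+yw$ changes the degree sequence (it lowers $d(v)$ and raises $d(y)$), so the greedy-tree characterization does not by itself rule the configuration out; and Theorem~\ref{thm-GFI-30} guarantees at most one pendant path of length $3$ in the whole tree, possibly none, so the switching fallback need not be available. To complete a proof along your lines you would need either a different transformation tailored to small $d(v)$ with large-degree neighbors, or additional structural lemmas controlling the degrees adjacent to a vertex that carries a pendant edge.
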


Further improvement of Theorem \ref{thm-GFI-15} was given by Gutman,
Furtula, and Ivanovi{\' c}.

\begin{te}[\cite{gfi-ntmabci-12}] \label{thm-GFI-30}
The $n$-vertex tree, $n \geq 10$, with minimal ABC-index contains at
most one pendant path of length $3$.
\end{te}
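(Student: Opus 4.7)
The plan is to argue by contradiction, exploiting the switching transformation of Lemma~\ref{lemma-switching} to convert a hypothetical pair of pendant paths of length~$3$ into a configuration that violates Theorem~\ref{thm-LG-10}. I would begin by supposing that a minimal-ABC tree $T$ of order $n \geq 10$ contains two distinct pendant paths of length~$3$, say $v - a_1 - b_1 - c_1$ with $d(v) \geq 3$ and $u - a_2 - b_2 - c_2$ with $d(u) \geq 3$, where the intermediate vertices all have degree~$2$ and $c_1, c_2$ are leaves. The vertices $v$ and $u$ may or may not coincide; in either case the two paths share no other vertices.

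Next I would perform the switching that deletes the edges $va_1$ and $a_2 b_2$ and inserts $vb_2$ and $a_2 a_1$. Neither of the pairs to be added is already an edge of $T$ (the two pendant paths are vertex-disjoint apart from a possible common attachment), so the operation produces a well-defined tree $T'$ on the same vertex set: removing the two edges splits $T$ into three components, and the two new edges reunite them. In the notation of Lemma~\ref{lemma-switching} this corresponds to the assignment $u \mapsto v$, $v \mapsto a_1$, $x \mapsto a_2$, $y \mapsto b_2$. The hypothesis $d(v) \geq d(a_2)$ is clear since $d(v) \geq 3 > 2 = d(a_2)$, and $d(a_1) \leq d(b_2)$ is clear as both equal~$2$. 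Since $d(a_1) = d(b_2)$, the equality clause of the lemma is triggered, so $\ABC(T') = \ABC(T)$; consequently $T'$ is itself a minimal-ABC tree.

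The punchline is that in $T'$ the path $u - a_2 - a_1 - b_1 - c_1$ is a pendant path of length~$4$: the vertices $a_2, a_1, b_1$ retain degree~$2$ after the exchange, $c_1$ remains a leaf, and $d(u) \geq 3$ is unchanged. This contradicts the conclusion of Theorem~\ref{thm-LG-10} applied to $T'$, completing the proof. The only delicate point in carrying out this plan is the bookkeeping that the switching simultaneously preserves tree-ness and yields the required pendant path of length~$4$; both checks are short and routine, so I do not anticipate a significant obstacle.
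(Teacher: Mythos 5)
Your argument is correct. One should note, however, that the paper does not prove Theorem~\ref{thm-GFI-30} at all: it is quoted from \cite{gfi-ntmabci-12}, where the original argument is a direct local transformation with an explicit estimate of the resulting change of the ABC index. Your route is genuinely different and, in the context of this paper, cleaner: the switch $T'=T-va_1-a_2b_2+vb_2+a_2a_1$ falls under Lemma~\ref{lemma-switching} with the assignment you give, and since $d(a_1)=d(b_2)=2$ the equality case applies (indeed the two deleted edges and the two inserted edges carry the same degree pairs $(d(v),2)$ and $(2,2)$, so $\ABC$ is trivially unchanged), all degrees are preserved, and the three components created by deleting $va_1$ and $a_2b_2$ are reunited by the new edges, so $T'$ is again a minimal-ABC tree of the same order $n\ge 10$; the concatenated path $u\,a_2\,a_1\,b_1\,c_1$ is then a pendant path of length $4$. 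Your checks are sound, including the case $u=v$, since the four vertices $v,a_1,a_2,b_2$ involved in the switch are pairwise distinct in either case and two distinct pendant paths of length $3$ can meet only at their attachment vertices. The only cosmetic remark is that the contradiction is obtained even more directly from Theorem~\ref{thm-GFI-15} (no pendant path of length $k\ge 4$ in a minimal-ABC tree with $n\ge 10$) than from Theorem~\ref{thm-LG-10}; using Theorem~\ref{thm-GFI-15} also sidesteps any discussion of which pendant path the leaf $c_1$ ``belongs to,'' and both cited results are independent of the statement being proved, so there is no circularity.
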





By Theorem \ref{thm-LG-10} and Corollary \ref{co-GFI-10}, it
follows that the minimal-ABC trees can be obtained by unifying each
pendant vertex of a tree $T$ with the center vertex of a $B_k$-branch.
In particular, if $T$ is just a star, then the minimal-ABC trees are the
same trees that are minimal with respect to Kragujevac trees \cite{hag-ktmabci-14}.

In \cite{d-sptmabci-2-2015}  it was shown that a minimal-ABC tree contains at
most one
$T_k$-branch, $k \geq 2$. The next three results, which will be used
in the proofs in the next two sections, considering the bounds on the
number of $B_1$-branches, a (non)coexistence of some types of
$B_k$-branches that have a common parent vertex as well as some
conditions on the existence of pendant path of length $3$.






\begin{te}[\cite{d-sptmabci-2014}] \label{te-no5branches-10}
A minimal-ABC tree does not contain a $B_k$-branch, $k \geq 5$.
\end{te}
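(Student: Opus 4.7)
The plan is to argue by contradiction. Suppose $T$ is a minimal-ABC tree containing a $B_k$-branch with $k\ge 5$, centered at a vertex $v$, and let $p$ denote the parent of $v$ in $T$. The goal is to exhibit a local modification $T\to T'$ that preserves the number of vertices and strictly decreases the ABC index, contradicting the minimality of $T$.

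The quantitative engine is the identity
\[
f(d,2)\;=\;\sqrt{\frac{d+2-2}{2d}}\;=\;\frac{1}{\sqrt{2}},\qquad d\ge 1.
\]
It collapses the total contribution of any $B_j$-branch (including its parent edge) to $j\sqrt{2}+f(j+1,d(p))$ when $j\ge 2$, and to simply $\sqrt{2}$ when $j=1$; hence the only $k$-sensitive piece of a $B_k$-branch is the parent edge $f(k+1,d(p))$.

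The natural family of surgeries replaces the $B_k$-branch at $v$ by a vertex-count-preserving collection of smaller branches hanging from the same parent $p$. A prototype is $B_k\mapsto B_{k-3}+B_1+B_1+B_1$: since $(2k-5)+3\cdot 2=2k+1=|V(B_k)|$, vertex count is preserved, and $d(p)$ grows by $3$. After the internal contribution $k\sqrt{2}$ cancels, one is left with
\begin{align*}
\ABC(T')-\ABC(T)\;=\;&\bigl[f(k-2,\,d(p)+3)-f(k+1,\,d(p))\bigr]\\
&+\sum_{n\in N(p)\setminus\{v\}}\bigl[f(d(p)+3,\,d(n))-f(d(p),\,d(n))\bigr]\\
&+\bigl[f(d(p)+3,\,d(p'))-f(d(p),\,d(p'))\bigr],
\end{align*}
where $p'$ denotes the parent of $p$ (if one exists). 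The latter two lines are non-positive termwise, thanks to the strict monotonicity of $f(a,b)$ in $a$ whenever $b>2$ and its constancy when $b=2$.

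The main obstacle is that the leading bracket can be positive once $d(p)$ is large relative to $k$, and that the compensating sums may be too thin in configurations where many of $p$'s other children are $B_1$-branches, whose parent-edge terms vanish. To handle these borderline cases, we would first reduce to the greedy-tree normal form supplied by Lemma~\ref{lemma-switching} and Theorem~\ref{thm-DS}, and then use Theorems~\ref{thm-LG-10} and~\ref{thm-GFI-30} to pin down $p$'s neighborhood as a collection of centers of $B_{\ge 1}$-branches together with at most one pendant path of length $3$. A bounded case split on $d(p)$ and on the number of $B_{\ge 2}$-children of $p$ then closes the argument, with the replacement partition adjusted per case (and, when necessary, combined with a simultaneous merger of two $B_1$-children of $p$ into a single $B_2$-branch). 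The remaining work is a short finite list of inequalities among explicit values of $f$ in $(0,1)$, each of which is routine to verify.
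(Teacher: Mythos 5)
First, note that this paper does not actually prove Theorem~\ref{te-no5branches-10}: it is imported verbatim from \cite{d-sptmabci-2014}, so there is no in-paper argument to match your proposal against; I therefore assess the proposal on its own merits, and it has a genuine gap. Your bookkeeping with $f(d,2)=1/\sqrt{2}$ and the resulting formula for $\ABC(T')-\ABC(T)$ under $B_k\mapsto B_{k-3}+3B_1$ is correct, but the sign analysis is not. The leading bracket $f(k-2,d(p)+3)-f(k+1,d(p))$ is positive for essentially all admissible configurations (already at $k=5$, $d(p)=6$ it equals about $0.08$), and by Lemma~\ref{appendix-pro-010-2} it \emph{increases} in $d(p)$, tending to $\sqrt{1/(k-2)}-\sqrt{1/(k+1)}$ (about $0.17$ for $k=5$). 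Meanwhile each compensating term $f(d(p)+3,d(n))-f(d(p),d(n))$ has magnitude of order $1/d(p)^2$, so even summed over all $d(p)-1$ remaining neighbours the compensation is $O(1/d(p))$ and vanishes. Hence for all sufficiently large $d(p)$ your surgery strictly \emph{increases} the ABC index: the problematic regime is not a ``borderline'' set of small cases but the generic one, and no ``bounded case split on $d(p)$'' can exist. Quantitatively, with degree-$3$ siblings the move already fails for $d(p)$ around $6$, and with any bounded sibling degrees it fails from some modest threshold on.

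The proposed repairs do not close this. Merging two $B_1$-children of $p$ into a $B_2$-branch adds a new positive parent-edge term $f(3,\cdot)\approx 0.58$ and, by lowering the final degree of $p$, also increases every other edge term at $p$, so it strictly worsens the balance. More fundamentally, \emph{any} replacement of the $B_k$-branch by a collection of smaller branches hung on the same parent suffers the same defect: when $d(p)$ is large, $f(k+1,d(p))$ is the cheapest attachment cost available among first arguments at most $k+1$, so shattering the branch trades a near-optimal parent edge for strictly costlier ones while the marginal benefit of raising $d(p)$ is $O(1/d(p))$. This is precisely why the proof in \cite{d-sptmabci-2014} is not a single local move at the parent but a longer case analysis employing different transformations (and auxiliary structural facts) depending on the configuration around $u$ and its parent, in the same spirit as the multi-case arguments for Theorems~\ref{thm-noB2B4} and \ref{thm-noB1B4} in the present paper. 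To salvage your plan you would need a separate argument for large $d(p)$ that does not reduce to redistributing the $2k+1$ vertices of the branch among children of $p$.
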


\begin{te}[\cite{d-sptmabci-2014}]  \label{thm-20}
A  minimal-ABC tree does not contain more than four $B_4$-branches.
\end{te}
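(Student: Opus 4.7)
My approach is by contradiction. Suppose that $T$ is a minimal-ABC tree containing at least five $B_4$-branches; I will construct a tree $T'$ with the same number of vertices and $\ABC(T')<\ABC(T)$. First I would locate the branches: by Theorem~\ref{thm-DS}, $T$ realizes the greedy tree of its own degree sequence, and by Corollary~\ref{co-GFI-10} the vertices of degree at least~$3$ induce a subtree. Since every $B_4$-branch has a center of degree~$5$, Lemma~\ref{lemma-DS} applied to a pendant-to-pendant path passing through two $B_4$-centers forces their common parent to have degree at least $5$. Combined with the greedy labeling rule (and possibly a preliminary application of Lemma~\ref{lemma-switching} to move a $B_4$-branch onto a heavier parent without increasing the ABC index), this reduces the analysis to the case in which all five $B_4$-centers share a single parent $p$ of degree $d_p\geq 5$.

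Next I would perform a vertex-preserving replacement: remove the five $B_4$-branches at~$p$ (which account for $5\cdot 9=45$ vertices) and attach in their place nine $B_2$-branches at~$p$ (which account for $9\cdot 5=45$ vertices). The resulting graph $T'$ is a tree on the same vertex set as $T$; the only degree that changes outside the replaced region is $d(p)$, which grows from $d_p$ to $d_p+4$. Using the identities $f(5,2)=f(3,2)=f(2,1)=1/\sqrt{2}$ coming from~(\ref{eqn:001}), the difference reduces to
\[
\ABC(T')-\ABC(T) = 9\,f(3,d_p+4) - 5\,f(5,d_p) - 2\sqrt{2} + \sum_{u\in N(p)\setminus C}\bigl[f(d(u),d_p+4)-f(d(u),d_p)\bigr],
\]
where $C$ is the set of the five removed $B_4$-centers. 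A direct check of the partial derivative $\partial f^2/\partial b=(2-a)/(ab^2)$ shows that $f(a,b)$ is non-increasing in $b$ when $a\geq 2$ (and strictly decreasing when $a\geq 3$), so the final sum is non-positive. Hence it suffices to establish the cleaner inequality
\[
9\,f(3,d_p+4) \leq 5\,f(5,d_p)+2\sqrt{2}
\]
for every admissible $d_p\geq 5$.

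The main obstacle is this closed-form inequality, which is tight: the gap between the two sides is of order $10^{-2}$ for small $d_p$ and shrinks further as $d_p$ grows, so the argument must be executed carefully. I would close it with a finite case check at small values of $d_p$ (say $5\leq d_p\leq 10$) together with a monotonicity estimate using the limits $f(3,d_p+4)\to 1/\sqrt{3}$ and $f(5,d_p)\to 1/\sqrt{5}$ as $d_p\to\infty$. The most delicate configuration is the one in which $p$ has many additional degree-$2$ neighbors (coming, for instance, from $B_1$-branches), because the non-positive correction sum then collapses to~$0$ and the bare inequality above has to provide the contradiction by itself; handling this boundary regime may force either a mixed replacement combining $B_3$- and $B_2$-branches or an initial application of Lemma~\ref{lemma-switching} to reduce $d_p$ before the main exchange is carried out.
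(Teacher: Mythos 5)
There is a genuine gap, and it is exactly in the place you flagged as ``delicate''. Your key closed-form inequality $9\,f(3,d_p+4)\leq 5\,f(5,d_p)+2\sqrt{2}$ is simply false for $d_p\geq 10$: at $d_p=10$ one has $9\sqrt{15/42}\approx 5.3785$ versus $5\sqrt{13/50}+2\sqrt{2}\approx 5.3779$, and your monotonicity claim runs backwards, since $9f(3,d_p+4)\to 9/\sqrt{3}=3\sqrt{3}\approx 5.196$ while $5f(5,d_p)+2\sqrt{2}\to\sqrt{5}+2\sqrt{2}\approx 5.064$, so the deficit grows to about $0.13$ as $d_p\to\infty$ (the inequality holds only for $5\leq d_p\leq 9$). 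The correction sum cannot rescue this: each term $f(d(u),d_p+4)-f(d(u),d_p)$ is exactly $0$ for degree-$2$ neighbours and of order $1/d_p^{2}$ for children of bounded degree (the single parent term is $o(1)$ as well), so the total correction tends to $0$ while the deficit tends to a positive constant; even at moderate values such as $d_p=20$ with all remaining children of degree $4$ the correction ($\approx 0.06$) does not cover the deficit ($\approx 0.08$). Hence your exchange of five $B_4$-branches for nine $B_2$-branches \emph{increases} the ABC index precisely in the regime of large $d(p)$, which is the typical regime for minimal-ABC trees, and no contradiction is obtained. This is consistent with the broader picture: $B_2$-branches are themselves asymptotically unfavourable (cf.\ Theorems \ref{te-B2-10} and \ref{thm-noB2B4}), so any workable vertex-preserving replacement for excess $B_4$-branches must be built from $B_3$-branches (for instance six $B_3$-branches plus a pendant path of length $3$, using $45=6\cdot 7+3$), not from $B_2$-branches. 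Note also that the present paper does not prove Theorem \ref{thm-20} at all; it imports it from \cite{d-sptmabci-2014}, so there is no in-paper proof to match, and your argument would have to stand on its own --- which, as it is written, it does not.

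A secondary, fixable-in-principle weakness is the opening reduction. Lemma \ref{lemma-switching}, Lemma \ref{lemma-DS} and Observation \ref{obs-DS-branches1} give monotonicity of degrees along a path or along the breadth-first order, but none of them lets you assume that all five $B_4$-centers are children of one common parent $p$: in the greedy tree of a given degree sequence the degree-$5$ internal vertices can be distributed over two or more parents on the last internal level, so either the concentration claim must be proved separately or the transformation and its bookkeeping must be carried out with the centers split among several parents (with several parent degrees in play). As it stands this step is asserted rather than proved, but the decisive defect is the false inequality above.
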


\begin{te}[\cite{d-sptmabci-2-2015}]\label{thm-30}
A minimal-ABC tree $G$ contains at most four $B_1$-branches.
Moreover, if $G$ is a $T_k$-branch itself, then it contains at most
three $B_1$-branches.
\end{te}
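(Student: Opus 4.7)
The plan is to proceed by contradiction: assume a minimal-ABC tree $T$ contains at least five $B_1$-branches (at least four, if $T$ is itself a $T_k$-branch) and build $T'$ on the same vertex set with $\ABC(T')<\ABC(T)$. Combining Corollary \ref{co-GFI-10}, Theorem \ref{thm-LG-10}, and Theorem \ref{thm-GFI-30}, every $B_1$-branch is a length-$2$ pendant path hanging from a vertex $v$ of the interior subtree $T_{\geq 3}$ which possesses at least one other neighbor of degree $\geq 3$. Using Lemma \ref{lemma-switching} together with Theorem \ref{thm-DS}, I can further assume $T$ is the greedy tree for its degree sequence, which pushes $B_1$-branches to vertices of largest degree.

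A key numerical observation is that each $B_1$-branch contributes exactly $f(d(v),2)+f(2,1)=\sqrt{2}$ to $\ABC(T)$, \emph{independently} of $d(v)$, because $f(d(v),2)=1/\sqrt{2}$ for every $d(v)$. Consequently any move that simply relocates a $B_1$-branch alters the ABC index only through the degree changes at the two affected vertices and their effects on incident edges. Exploiting this, I would apply a pigeonhole argument to the assumed five (or four) $B_1$-branches to isolate a vertex $v$ on which a concrete local rearrangement converts two $B_1$-branches into a single $B_2$-branch -- promoting one $B_1$-center to the new degree-$3$ center of the $B_2$-branch -- while $d(v)$ drops by one. Writing the resulting change $\Delta\ABC$ as a sum of $f$-differences and using that $f(x,y)$ is strictly decreasing and convex in $x$ for $y\geq 3$, one checks $\Delta\ABC<0$. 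The bounds from Theorems \ref{te-no5branches-10} and \ref{thm-20} keep the set of admissible degree profiles around $v$ narrow enough for the inequality to close.

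The $T_k$-branch refinement is obtained by specializing the same argument to the setting in which $v$ is the root of $T$, so $d(v)=k+1$ is fixed and no ``parent-edge'' contribution appears in $\Delta\ABC$; removing one positive term from the sum sharpens the threshold from five to four. The principal obstacle I foresee is the final analytic inequality: since $f(\cdot,2)$ is constant, the gain from reducing $d(v)$ by one is offset by a loss at every neighbor of $v$ in $T_{\geq 3}$, and the net sign depends on the precise degree profile of $N(v)\cap V(T_{\geq 3})$. A short case analysis, splitting on whether these neighbors have degree $3$, $4$, or more, becomes unavoidable, with each subcase reduced to a one-variable inequality verified by elementary calculus on the interval $[3,n-1]$. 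Carrying this analysis out cleanly, without assuming bounds stronger than those already established, is where the bulk of the technical work will lie.
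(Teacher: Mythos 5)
First, note that this paper does not actually prove Theorem~\ref{thm-30}: it is quoted from \cite{d-sptmabci-2-2015}, where the argument replaces whole groups of $B_1$-branches attached to a terminal vertex by alternative configurations on the same vertex set (e.g.\ $B_2$-type branches) and closes a long case analysis over the degrees of the terminal vertex and its parent. Measured against that task, your sketch has genuine gaps. The central move --- ``converting two $B_1$-branches into a single $B_2$-branch, promoting one $B_1$-center to the new degree-$3$ center, while $d(v)$ drops by one'' --- is not a well-defined transformation: two $B_1$-branches contain four vertices, while a $B_2$-branch contains five, so the promoted center cannot acquire two pendant paths of length $2$; what the rearrangement actually produces is a degree-$3$ vertex carrying one pendant path of length $1$ and one of length $2$. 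Writing out the change of the index for that corrected move, the only negative term is $f(3,d(v)-1)-f(2,d(v))$, against which stand $f(1,3)-f(1,2)=\sqrt{2/3}-\sqrt{1/2}>0$ and the positive terms $f(d(w),d(v)-1)-f(d(w),d(v))$ for every neighbour $w$ of $v$ of degree at least $3$; their sum can exceed the gain, so the inequality $\Delta\ABC<0$ simply does not hold for all admissible degree profiles. Indeed it cannot: if merging any two co-located $B_1$-branches always lowered the index, a minimal-ABC tree could carry at most one $B_1$-branch per vertex, contradicting the very bound ``at most four'' (which is attained, with several $B_1$-branches sharing one terminal vertex, in known minimal trees). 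This is why the threshold in the theorem is a count of $B_1$-branches globally, not a per-vertex statement, and why the genuine proof must recombine five (resp.\ four) $B_1$-branches at once rather than two.

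Two further steps are asserted without support. The pigeonhole reduction (five $B_1$-branches force two at a common vertex where your merge applies) needs the facts that every parent of a $B_1$-branch is a terminal vertex and that there is at most one $T_k$-branch; neither is argued, and the second is itself a nontrivial theorem of \cite{d-sptmabci-2-2015}. And the refinement for the case when $G$ is a $T_k$-branch cannot come from ``dropping one positive parent-edge term'' in a two-branch merge, since a local move involving only two $B_1$-branches is insensitive to whether the total number of $B_1$-branches is four or five; the distinction five-versus-four arises only when the whole family of $B_1$-branches (ten, resp.\ eight, vertices) is reorganised, with the presence or absence of the edge from the terminal vertex to its parent entering the degree bookkeeping. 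As it stands, the proposal identifies a useful observation ($f(x,2)=1/\sqrt{2}$ for all $x$) but neither its key transformation nor its key inequality survives scrutiny, so the approach would fail.
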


\begin{lemma}[\cite{d-sptmabci-2014}] \label{lemma-10}
A minimal-ABC tree does not contain
\begin{enumerate}
\item[(a)] a $B_1$-branch and a $B_4$-branch,
\item[(b)] a $B_2$-branch and a $B_4$-branch,
\end{enumerate}
that have a common parent vertex.
\end{lemma}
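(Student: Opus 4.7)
The plan is a proof by contradiction via a local restructuring. Suppose $T$ is a minimal-ABC tree in which some vertex $v$ is the common parent of a $B_4$-branch (centered at $u$) and either a $B_1$- or a $B_2$-branch (centered at $w$). I exhibit a tree $T'$ on the same vertex set with $\ABC(T')<\ABC(T)$, which contradicts the minimality of $T$.

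The decisive simplification is the identity $f(d,2)=\sqrt{1/2}$ valid for every $d\geq 1$: every edge incident to a degree-$2$ vertex contributes the same constant, independent of the degree of the other endpoint. Since each arm-edge of any $B_k$-branch has a degree-$2$ endpoint, such edges contribute identically in $T$ and $T'$, and to compare $\ABC(T')$ with $\ABC(T)$ one only needs to track the few non-constant edges (those whose both endpoints have degree at least $3$), together with the edges incident to vertices whose degree actually changes.

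For part~(b), the transformation detaches one arm $(\alpha,\beta)$ of the $B_4$-branch from $u$ and reattaches $\alpha$ to $w$, thereby turning $(B_4,B_2)$ at $v$ into $(B_3,B_3)$. Since $d_v$ is preserved, the full calculation collapses to
\[
\ABC(T')-\ABC(T)=2f(d_v,4)-f(d_v,5)-f(d_v,3),
\]
which a direct check on the closed form $f(d_v,b)^2=(d_v+b-2)/(d_v b)$ shows to be strictly negative for every $d_v\ge 3$. The borderline $d_v=2$ is excluded by Theorem~\ref{thm-GFI-15}, which forbids the internal path $u$-$v$-$w$ of length~$2$ between two vertices of degree exceeding~$2$.

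For part~(a), the analogous move detaches one arm of the $B_4$ from $u$ and reattaches its middle vertex $\alpha$ directly to $v$, replacing $B_4+B_1$ at $v$ by $B_3+B_1+B_1$ and raising $d_v$ by one. The $\ABC$-change becomes
\[
\bigl[f(d_v+1,4)-f(d_v,5)\bigr]+\sum_{\substack{x\sim v,\,x\neq u\\ d(x)\ge 3}}\bigl[f(d_v+1,d(x))-f(d_v,d(x))\bigr],
\]
and the identity $f(d_v+1,4)^2-f(d_v,5)^2=(d_v+3)(d_v-4)/[20\,d_v(d_v+1)]$ shows that the first bracket is negative for $d_v\le 3$, vanishes at $d_v=4$, and is positive for $d_v\ge 5$. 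Each summand of the second term is strictly negative because $f(\cdot,b)$ is strictly decreasing in the first argument whenever $b\ge 3$; by Theorem~\ref{thm-GFI-15} the parent of $v$ (if present) must have degree at least~$3$, so it already contributes one strictly negative summand. This handles all non-root $v$ with $d_v\le 4$. The main obstacle is therefore the remaining cases — essentially $d_v\ge 5$, or $v$ a root with $d_v=4$ whose non-$u$ neighbors are all $B_1$-centers — where the global bounds of Theorems~\ref{te-no5branches-10}, \ref{thm-20}, and \ref{thm-30} (limiting the number of $B_4$- and $B_1$-branches) force $v$ to possess enough heavier children (typically $B_2$- or $B_3$-centers) to supply the required negative summands; the short list of residual small-order configurations is then beaten directly by a concrete lighter tree of the same order (for example, $B_4+3B_1$ at a root vertex is beaten by $B_2+B_2+B_2$).
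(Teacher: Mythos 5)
Your part (b) is fine, and in fact quite clean: the arm transfer turns $(B_4,B_2)$ into $(B_3,B_3)$ without touching $d(v)$, all edges with a degree-$2$ endpoint contribute the constant $\sqrt{1/2}$, and $2f(d_v,4)-f(d_v,5)-f(d_v,3)<0$ for every $d_v\ge 3$ because $y\mapsto f(x,y)=\sqrt{1/x+(1-2/x)/y}$ is strictly convex in $y$ when $x>2$; the degenerate case $d_v=2$ is indeed excluded by Theorem~\ref{thm-GFI-15}. (Note that the present paper does not prove Lemma~\ref{lemma-10} at all -- it is imported from \cite{d-sptmabci-2014} -- so your argument has to stand on its own, and for (b) it does.)

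Part (a), however, has a genuine gap, and it sits exactly where the difficulty lies. Since $v$ is the parent of the $B_4$-center $u$ with $d(u)=5$, Observation~\ref{obs-DS-branches1} (or Lemma~\ref{lemma-DS} applied to a leaf-to-leaf path through a leg of the $B_4$, $u$, $v$ and the $B_1$-branch) forces $d(v)\ge 5$; so the cases $d(v)\le 4$ you dispose of are vacuous, and in the only relevant regime your leading term $f(d(v)+1,4)-f(d(v),5)$ is strictly positive, increasing towards $1/2-1/\sqrt{5}\approx 0.053$. Your claim that the global bounds force enough degree-$\ge 3$ neighbors to compensate is not proved and is false as stated: Theorem~\ref{thm-30} allows four $B_1$-branches at $v$, $B_2$-children of $v$ are excluded by your own part (b), and the remaining children may all be $B_3$-centers, each contributing only $f(d(v)+1,4)-f(d(v),4)$. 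Concretely, let $d(v)=10$ with children one $B_4$, four $B_1$, four $B_3$, and parent $v_p$ of degree at least $10$ (e.g.\ a root of degree $11$ whose other children are degree-$10$ vertices carrying nine $B_3$-branches each); this configuration is compatible with every structural result quoted in the paper (Theorems~\ref{thm-GFI-15}, \ref{thm-LG-10}, \ref{te-no5branches-10}, \ref{thm-20}, \ref{thm-30}, Lemma~\ref{lemma-DS}, Observation~\ref{obs-DS-branches1}), yet your move changes the index by $f(11,4)-f(10,5)+4\bigl(f(11,4)-f(10,4)\bigr)+\bigl(f(11,d(v_p))-f(10,d(v_p))\bigr)>0.0336-0.0167-0.0148>0$, i.e.\ it \emph{increases} the ABC index and yields no contradiction; even for $d(v)=5$ the sign depends delicately on the unknown degrees of the other neighbors. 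So part (a) is not established by this transformation: one needs a different local move (or substantial additional structural analysis eliminating such configurations), which is what the cited source does. A smaller omission: your sum tacitly assumes $v$ has no leaf neighbor (such an edge would contribute the positive amount $f(d(v)+1,1)-f(d(v),1)$); this is true, but only by Theorem~\ref{thm-LG-10}, and should be said.
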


\begin{lemma}[\cite{df-ftmabc-2015}] \label{lemma-10-1}
A minimal-ABC tree does not contain
a $B_3$-branch and a $B_1^*$-branch
that have a common parent vertex.
\end{lemma}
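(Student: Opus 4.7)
The plan is to show that the configuration is non-minimal by exhibiting a modification of the tree with strictly smaller ABC index. Let $T$ be a minimal-ABC tree that contains, at a common parent vertex $p$ of degree $d:=d(p)$, a $B_3$-branch with center $v$ and a $B_1^*$-branch with center $c$. Together these two branches comprise $7+3=10$ vertices. I would construct $T'$ from $T$ by removing these $10$ vertices (together with the edges $pv$ and $pc$) and attaching in their place at $p$ two fresh $B_2$-branches, which again comprise $5+5=10$ vertices. The rest of $T$ is left untouched, so $T'$ is a tree on the same vertex set as $T$.

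The next step is ABC bookkeeping. The interior edges of the two branches of $T$ have endpoint-degree pairs drawn only from $\{(4,2),(2,2),(2,1)\}$, while those of the two $B_2$-branches of $T'$ have pairs drawn only from $\{(3,2),(2,1)\}$. Since $f(4,2)=f(3,2)=f(2,2)=f(2,1)=\sqrt{1/2}$, a direct enumeration shows that the total interior contribution is $8\sqrt{1/2}$ in both $T$ and $T'$, and therefore cancels. Only the two $p$-incident edges that were modified remain, giving
\[
\ABC(T')-\ABC(T) \;=\; 2\,f(d,3)\;-\;f(d,4)\;-\;f(d,2).
\]

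The main step is to show that the right-hand side is negative. For $d\ge 3$ I would establish strict convexity of $k\mapsto f(d,k)$: setting $\phi(k):=f(d,k)^2=1/d+(d-2)/(dk)$, a direct computation gives
\[
2\phi''(k)\phi(k)-\phi'(k)^2 \;=\; \frac{(d-2)\bigl[4k+3(d-2)\bigr]}{d^2k^4} \;>\;0 \qquad\text{for } d\ge 3,
\]
which is precisely the criterion for $f(d,k)=\sqrt{\phi(k)}$ to be strictly convex in $k$. Strict midpoint convexity at $k=3$ then yields $2\,f(d,3)<f(d,2)+f(d,4)$, hence $\ABC(T')<\ABC(T)$, contradicting the minimality of $T$. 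This convexity verification is where the argument becomes less automatic; it is essentially the only analytic obstacle.

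It remains to handle the edge case $d=2$, which forces $p$ to be the root of $T$ with $v$ and $c$ as its only two children. Here one checks $\ABC(T')=\ABC(T)$, so no immediate strict inequality is obtained. However, in $T'$ the root $p$ (still of degree $2$) now lies between two vertices of degree $3$ (the two new $B_2$-centers), producing an internal path of length $2$. By Theorem~\ref{thm-GFI-15} such an internal path cannot appear in a minimal-ABC tree, so $T'$ is not minimal, and hence there exists a tree $T''$ with $\ABC(T'')<\ABC(T')=\ABC(T)$, again contradicting the minimality of $T$. This closes the outline.
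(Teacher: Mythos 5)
Your argument is correct, and note that this paper does not actually prove Lemma~\ref{lemma-10-1}: it is quoted from \cite{df-ftmabc-2015}, so there is no in-paper proof to match against. Your route is a clean, self-contained alternative. Replacing the $B_3$-branch plus $B_1^*$-branch ($10$ vertices) by two $B_2$-branches ($10$ vertices) preserves both the order and the degree of the common parent $p$, and your bookkeeping is exact: every interior edge involved contributes $f(4,2)=f(3,2)=f(2,2)=f(2,1)=\sqrt{1/2}$, so the change collapses to $2f(d,3)-f(d,2)-f(d,4)$, and your convexity computation $2\phi''(k)\phi(k)-\phi'(k)^2=\frac{(d-2)[4k+3(d-2)]}{d^2k^4}>0$ for $d\ge 3$ is correct, giving strict negativity by midpoint convexity and the desired contradiction. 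Your treatment of $d(p)=2$ is also valid, though you could finish that case even faster: there $T$ itself contains the pendant path $v\,p\,c\,x\,y$ of length $4$, which is already forbidden by Theorem~\ref{thm-GFI-15}, so no auxiliary tree $T'$ is needed. Stylistically, your proof differs from the arguments used for the analogous coexistence results in this paper (and in the cited source), which apply a depicted local transformation and then control the sign of the change via the monotonicity Lemmas~\ref{appendix-pro-010} and~\ref{appendix-pro-010-2}, limit bounds, and a finite case analysis on the degrees; your exact $\sqrt{1/2}$-cancellation plus convexity of $f(d,\cdot)$ avoids all numerical case-checking, at the modest price of verifying one convexity inequality.
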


\begin{te}[\cite{ddf-sptmabci-3-2015}] \label{te-B2-10}
A minimal-ABC tree  of order $n > 18$ with a pendant path of length
$3$ may contain $B_2$-branches if and only if it is of order $161$
or $168$. Moreover, in this case the minimal-ABC tree is comprised of a
single central vertex, $B_3$-branches and one $B_2$-branch, including a
pendant path of length $3$ that may belong to a $B_3^*$- or
$B_2^*$-branch.
\end{te}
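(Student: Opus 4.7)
The plan is to combine the structural results collected above with switching and greedy-tree arguments in order to squeeze a minimal-ABC tree $T$ of order $n>18$ that simultaneously carries a pendant path of length $3$ and a $B_2$-branch into a one-parameter family, and then to use an explicit ABC-comparison to isolate the two admissible orders.

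First, I would pin down the global shape. By Theorem~\ref{thm-LG-10} and Theorem~\ref{te-no5branches-10}, each pendant of $T$ sits at the tip of a pendant path of length $2$ or $3$, and each branch attached to the internal subtree of vertices of degree $\ge 3$ (a tree by Corollary~\ref{co-GFI-10}) is of type $B_k$ or $B_k^*$ with $k\in\{1,2,3,4\}$; Theorem~\ref{thm-GFI-30} forces the unique pendant path of length $3$ to lie inside a single starred branch. Next, using Theorem~\ref{thm-DS}, Lemma~\ref{lemma-switching}, and the previously-known bound of at most one $T_k$-branch in a minimal-ABC tree, I would argue that this internal subtree is trivial, consisting of a single central vertex $v$: any remaining internal edge could be eliminated by a switching move without increasing ABC, contradicting minimality once $n>18$.

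Second, I would restrict the children of $v$. The presence of a $B_2$-branch at $v$ rules out $B_4$-children by Lemma~\ref{lemma-10}(b), and Lemma~\ref{lemma-10-1} together with a direct ABC-comparison rules out $B_1^*$ and $B_4^*$ as the type of the starred branch. Theorem~\ref{thm-30} bounds the number of $B_1$-children, and local pair-exchanges at $v$ -- trading $\{B_2,B_2\}$ for $\{B_1,B_3\}$, or $\{B_2,B_3\}$ for $\{B_3,B_3\}$, in whichever direction decreases ABC -- reduce the multiset of children of $v$ to exactly one $B_2$-branch, exactly one starred branch $B_k^*$ with $k\in\{2,3\}$, and all remaining children of type $B_3$. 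Thus $T$ is described by a single integer parameter $d=\deg(v)$ together with the type of the starred branch, matching the structural part of the statement.

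Finally, I would run the ABC-comparison. Writing $\ABC(T)$ as an explicit function of $d$ via \eqref{eqn:001}, the requirement that $T$ beat the all-$B_3$-plus-starred-branch competitor of the same order becomes a scalar inequality in $d$ built from square-roots of rational functions of $d$. The main obstacle of the whole proof is precisely this last numerical step: one must show that the sign of the relevant difference changes at only two integer values of $d$, one for each starred-branch type. Monotonicity in $d$ of the edge-contributions $f(d,3)$ and $f(d,4)$ reduces the problem to a finite verification, and the two surviving values of $d$ translate, through the linear count $n = 1 + 7(d-2) + 5 + (7\text{ or }6)$, precisely into $n=161$ and $n=168$. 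The converse direction follows by exhibiting the explicit candidate tree with these parameters and invoking Theorem~\ref{thm-DS} to confirm it is greedy for its degree sequence, so each of $n=161,168$ indeed admits a minimal-ABC tree with a $B_2$-branch, completing the proof.
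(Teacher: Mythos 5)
You are proving a statement that this paper only quotes: Theorem~\ref{te-B2-10} is imported from \cite{ddf-sptmabci-3-2015}, and the present paper contains no proof of it, so your proposal can only be measured against the (long, case-intensive) argument in that reference and against internal consistency. Measured that way, the sketch has two genuine gaps. The first is the reduction to a single central vertex. You claim that for every $n>18$ the subtree induced by vertices of degree at least $3$ collapses to one vertex because ``any remaining internal edge could be eliminated by a switching move''. Lemma~\ref{lemma-switching} swaps edge ends and preserves the degree sequence; it cannot delete an internal edge, and in fact minimal-ABC trees of large order do have a nontrivial internal tree, so the claim as stated is false in general. The star-like shape asserted in the theorem is a \emph{conclusion} that holds only under the joint hypotheses (a pendant path of length $3$ together with a $B_2$-branch), and forcing that shape is precisely the hard core of the proof in \cite{ddf-sptmabci-3-2015}; your outline assumes it at the outset.

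The second gap is in the final comparison and in the converse direction. To conclude that only the orders $161$ and $168$ survive, it is not enough that your one-parameter family beats the ``all-$B_3$ plus starred branch'' competitor of the same order: minimality is minimality over \emph{all} trees of that order, hence over all degree sequences, and Theorem~\ref{thm-DS} only settles the comparison \emph{within} a fixed degree sequence. For the same reason, exhibiting the candidate trees at $n=161,168$ and checking that they are greedy does not show that the minimal-ABC trees of those orders actually contain a $B_2$-branch; one must beat every competing degree sequence, which in \cite{ddf-sptmabci-3-2015} requires substantial additional estimates (and verification for the sporadic small orders). Finally, the bookkeeping is off: with one central vertex $v$, $(d(v)-1)$ $B_3$-branches, one $B_2$-branch and one extra vertex making the starred branch, the order is $n=7\,d(v)$, giving $d(v)=23$ and $24$ for $161$ and $168$; your formula $n=1+7(d-2)+5+(7\text{ or }6)$ does not produce these orders for any integer $d$, so the step translating the surviving degrees into the two orders needs to be redone.
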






For further properties of the minimal-ABC trees, the reader is
referred to \cite{d-ectmabci-2013, d-sptmabci-2014,
d-sptmabci-2-2015, ddf-sptmabci-3-2015, d-abcircg-2015,
df-ftmabc-2015}.  Before we proceed with the main results of this
paper, we present the following observation that will be applied in
the further analysis.


\begin{observaton}  \label{obs-DS-branches1}
Let $G$ be a minimal-ABC tree with the root vertex $v_0$ and let $v_0, v_1, \dots, v_n$ be the sequence of vertices
obtain by the breadth-first search of $G$.
If $d(v_i), d(v_j) \geq 3$ and $i < j$, then by Lemma~\ref{lemma-switching}, we may assume that $d(v_i) \geq d(v_j)$.
\end{observaton}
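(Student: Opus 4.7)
The plan is to replace $G$ by the greedy tree having the same degree sequence, whose BFS labeling (as prescribed by the greedy construction) satisfies the desired condition by definition.

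By Theorem~\ref{thm-DS}, for any degree sequence $D$ the greedy tree in Definition~\ref{def-GT} attains the minimum ABC index among all trees with degree sequence $D$. Hence, if $G$ is any minimal-ABC tree with degree sequence $D$, then the corresponding greedy tree $G^\star$ is also minimal-ABC. Since the observation only asserts ``we may assume'', we may, up to isomorphism, take $G = G^\star$.

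The labeling of $G^\star$ given by Definition~\ref{def-GT} is a BFS labeling that assigns, at each step, the largest degree still unused: first to the root, then to the children of the root in non-increasing order, then to the grandchildren parent-by-parent (children of $v_1$ first, then of $v_2$, and so on, each group in non-increasing order), and so on down the levels. Consequently the map $i\mapsto d(v_i)$ is non-increasing: within a single level this is decreasing by construction, and the last degree used at one level is $\geq$ the first degree used at the next level because the latter is drawn from the degrees still ``available'' after the former has been assigned. In particular, for all $i<j$ one has $d(v_i)\geq d(v_j)$, from which the stated inequality (restricted to pairs with both degrees at least $3$) follows immediately.

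The explicit reference to Lemma~\ref{lemma-switching} in the observation is natural because that lemma is the key ingredient in the proof of Theorem~\ref{thm-DS}: iteratively applying the switching transformation turns any tree with degree sequence $D$ into the corresponding greedy tree while weakly decreasing the ABC index at every step. If one preferred a direct proof of the observation from Lemma~\ref{lemma-switching} rather than invoking Theorem~\ref{thm-DS}, the main obstacle would be the case of pairs $v_i,v_j$ in which one vertex is an ancestor of the other: the parent-edge switch is no longer applicable, and one would have to argue separately, typically by invoking Lemma~\ref{lemma-DS} on a leaf-to-leaf path through $v_i$ and $v_j$. Routing the argument through the greedy tree avoids this case analysis entirely.
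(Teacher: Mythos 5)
There is a genuine gap, and it lies in what ``we may assume'' has to deliver. Your detour through Theorem~\ref{thm-DS} shows only that \emph{some} minimal-ABC tree with the same degree sequence as $G$ (namely the greedy tree $G^\star$) has non-increasing degrees along its BFS order. But the observation is invoked inside proofs by contradiction (Proposition~\ref{pro-B4-B2}, Theorems~\ref{thm-noB2B4} and~\ref{thm-noB1B4}) where $G$ is a minimal-ABC tree \emph{assumed to contain} specific $B_4$-, $B_2$- or $B_1$-branches with designated vertices $u_1, v_1, u, v, u_p, v_p$, and the observation is used to order these particular vertices in the BFS of that same tree. Passing to $G^\star$ preserves only the degree sequence: the greedy rearrangement is global, and there is no reason the greedy tree still contains the assumed $B_4$- and $B_2$-branches (for instance, a degree-$5$ vertex whose children all had degree $2$ in $G$ may receive children of degree $3$ or $4$ in $G^\star$). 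So after your replacement the standing hypotheses of those proofs may no longer hold, and the ``without loss of generality'' collapses. The justification the paper intends is local, which is why Lemma~\ref{lemma-switching} is cited: if $i<j$, $d(v_i),d(v_j)\geq 3$ and $d(v_i)<d(v_j)$, one exchanges the subtrees rooted at $v_i$ and $v_j$ by deleting the two parent edges and reconnecting crosswise; by Lemma~\ref{lemma-switching} this does not increase the ABC index (so the new tree is again minimal), it changes no vertex degree, and it moves every branch intact, so the presence of $B_4$-, $B_2$-, $B_1$-branches and the identities of their centers are preserved while the offending pair is put in order; iterating (top-down along the BFS, so that the parents' degrees are already ordered when the lemma's hypothesis $d$(parent of $v_i$)$\,\geq d$(parent of $v_j$) is needed) yields the claim.

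Two smaller remarks. Your verification that the greedy labeling is a BFS with non-increasing degrees is fine as far as it goes, and your closing paragraph correctly identifies the ancestor--descendant pairs as the delicate case for the direct switching argument (there the crosswise reconnection is not available and one argues separately, e.g.\ via Lemma~\ref{lemma-DS} or the top-down induction just described); but routing through the greedy tree does not repair this, it merely trades a local difficulty for the loss of the structural hypotheses that the observation exists to serve.
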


In what follows, we will present some new structural properties of minimal-ABC trees
by strengthening Lemma~\ref{lemma-10}.
In Section~\ref{sec-B4B2} we will show that a minimal-ABC tree cannot contain a $B_4$-branch and a $B_2$-branch simultaneously,
while in Section~\ref{sec-B4B1} we will show that a minimal-ABC tree cannot contain a $B_4$-branch and a $B_1$-branch simultaneously.

\section{Trees containing simultaneously $B_4$- and  $B_2$-branches}\label{sec-B4B2}

This section is devoted to proving that a minimal-ABC tree cannot
contain a $B_4$-branch and a $B_2$-branch simultaneously. First we
state two technical lemmas which will be particularly useful
throughout the paper. For the proof the reader can be referred to
\cite{d-sptmabci-2014}. The function $f(x,y)$ is defined as in
\eqref{eqn:001}.

\begin{lemma} \label{appendix-pro-010}
Let $g(x,y)=-f(x,y)+f(x+\Delta x,y-\Delta y)$, with  real numbers
$x, y \geq 2$,  $\Delta x \geq 0$,  $0 \leq \Delta y < y$. Then
$g(x,y)$ increases in $x$ and decreases in $y$.
\end{lemma}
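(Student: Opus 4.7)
The plan is to reduce both monotonicity claims to sign conditions on the second partial derivatives of $f$. Treating $\Delta x,\Delta y$ as fixed parameters, differentiating $g$ with respect to $x$ gives $\partial g/\partial x=-f_x(x,y)+f_x(x+\Delta x,\,y-\Delta y)$, so the first assertion is equivalent to $f_x(x+\Delta x,\,y-\Delta y)\ge f_x(x,y)$. I would split this as the telescoping sum
$$\bigl[f_x(x+\Delta x,y)-f_x(x,y)\bigr]+\bigl[f_x(x+\Delta x,y-\Delta y)-f_x(x+\Delta x,y)\bigr].$$
The first bracket is non-negative provided $f_{xx}\ge 0$ (since $\Delta x\ge 0$), and the second provided $f_{xy}\le 0$ (since $y-\Delta y\le y$). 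An analogous decomposition of $\partial g/\partial y$ reduces the decreasing-in-$y$ claim to $f_{yy}\ge 0$ together with the same mixed-partial inequality $f_{xy}\le 0$.

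What remains is the explicit verification of these three inequalities on the region that actually appears along the telescoping paths. Taking logarithms, $\log f=\tfrac12[\log(x+y-2)-\log x-\log y]$, one reads off
$$f_x=\frac{f}{2}\cdot\frac{2-y}{x(x+y-2)},\qquad f_y=\frac{f}{2}\cdot\frac{2-x}{y(x+y-2)}.$$
A direct second differentiation, followed by factoring out the positive quantity $f/[x^2y^2(x+y-2)^2]$, reduces each inequality to a polynomial comparison in $x$ and $y$. For instance, the mixed partial simplifies to
$$f_{xy}=-\frac{f\,(xy+2x+2y-4)}{4xy(x+y-2)^2},$$
which is manifestly $\le 0$ whenever $x\ge 2$ and $y>0$; the corresponding numerators for $f_{xx}$ and $f_{yy}$ turn out to be $(y-2)(3y+4x-6)$ and $(x-2)(3x+4y-6)$, both non-negative in the relevant region.

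The main obstacle I anticipate is the domain bookkeeping. The hypothesis only gives $x,y\ge 2$, but $y-\Delta y$ may drop below $2$, and $f_{xx}$ does change sign across $y=2$. The splitting above, however, evaluates $f_{xx}$ only at the original $y\ge 2$, and evaluates $f_{xy}$ and $f_{yy}$ only at points whose first coordinate is $\ge 2$ and second coordinate is positive, which is precisely where the three polynomial numerators above remain sign-definite. Making this path-dependence explicit at the start is the key care point of an otherwise routine calculus argument.
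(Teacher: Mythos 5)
Your proposal is correct: the sign formulas check out (up to positive factors, $f_{xy}$ has numerator $-(xy+2x+2y-4)$, while $f_{xx}$ and $f_{yy}$ have numerators $(y-2)(4x+3y-6)$ and $(x-2)(4y+3x-6)$ respectively, matching yours), and the telescoping decomposition together with your domain bookkeeping — $f_{xx}$ only at second coordinate $y\ge 2$, and $f_{xy}$, $f_{yy}$ only at points with first coordinate $\ge 2$ and second coordinate in $(y-\Delta y,\,y]\subset(0,\infty)$, where $x+t-2>0$ keeps $f$ well defined — closes the one genuinely delicate point, namely that $y-\Delta y$ may drop below $2$. Note that the paper itself does not prove this lemma but defers it to the reference \cite{d-sptmabci-2014}; the argument there is likewise an elementary sign analysis of the partial derivatives of $g$, so your route is essentially the standard one rather than a genuinely different method, with the merit that it is self-contained and makes the path-dependence of the domain conditions explicit.
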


Due to the symmetry of $f(x,y)$, Lemma \ref{appendix-pro-010} can be
rewritten as follows.

\begin{lemma}  \label{appendix-pro-010-2}
Let $g(x,y)=-f(x,y)+f(x -\Delta x,y+\Delta y)$, with  real numbers
$x, y \geq 2$,  $\Delta y \geq 0$,  $0 \leq \Delta x < x$. Then
$g(x,y)$ decreases in $x$ and increases in $y$.
\end{lemma}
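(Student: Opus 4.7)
The plan is to deduce Lemma~\ref{appendix-pro-010-2} as a direct consequence of Lemma~\ref{appendix-pro-010} by invoking the manifest symmetry of $f$. From the explicit expression $f(x,y) = \sqrt{(x+y-2)/(xy)}$ we see that $f(x,y) = f(y,x)$ for all admissible arguments. So first I would rewrite the function appearing in Lemma~\ref{appendix-pro-010-2} using this symmetry:
\[
g(x,y) = -f(x,y) + f(x-\Delta x,\, y+\Delta y) = -f(y,x) + f(y+\Delta y,\, x-\Delta x).
\]
Denoting the right-hand side by $\tilde g(y,x)$, the monotonicity of $g$ in $x$ (respectively, in $y$) is identical to the monotonicity of $\tilde g$ in its second (respectively, first) slot.

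The second step is to apply Lemma~\ref{appendix-pro-010} to $\tilde g$. Relabelling $(x,y,\Delta x,\Delta y) \mapsto (y,x,\Delta y,\Delta x)$ in the statement of Lemma~\ref{appendix-pro-010} produces precisely $\tilde g(y,x) = -f(y,x)+f(y+\Delta y,\,x-\Delta x)$, and the hypotheses of that lemma, namely $x,y \geq 2$, $\Delta x \geq 0$, $0 \leq \Delta y < y$, translate under the relabeling into $y,x \geq 2$, $\Delta y \geq 0$, $0 \leq \Delta x < x$, which are exactly the hypotheses of Lemma~\ref{appendix-pro-010-2}. Lemma~\ref{appendix-pro-010} then asserts that $\tilde g(y,x)$ increases in its first argument and decreases in its second. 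Translating back to $g(x,y)$ yields the claim: $g$ decreases in $x$ and increases in $y$.

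There is no genuine obstacle here beyond the careful bookkeeping of which variable plays which role; once the symmetry $f(x,y)=f(y,x)$ is applied, Lemma~\ref{appendix-pro-010-2} is merely a coordinate-swapped restatement of Lemma~\ref{appendix-pro-010}, and the parameter ranges match without any further adjustment.
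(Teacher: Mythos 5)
Your proposal is correct and matches the paper's own justification, which derives Lemma~\ref{appendix-pro-010-2} from Lemma~\ref{appendix-pro-010} precisely via the symmetry $f(x,y)=f(y,x)$; you have merely spelled out the variable relabeling explicitly. No further comment is needed.
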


First, we present two crucial properties for the minimal-ABC trees, which will be applied in the proof of the main result
in this section - Theorem~\ref{thm-noB2B4}.

\begin{pro} \label{pro-B4-B2}
A minimal-ABC tree cannot contain a $B_4$-branch and a $B_2$-branch
simultaneously, where $u$ is the parent vertex of the center of a $B_4$-branch, and $v$ is the parent vertex of the center of a $B_2$-branch, when
\begin{itemize}
\item $d(v)=5$ and $d(v)  \leq d(u) \leq 12$;
\item $d(v)=6$ and $d(v)  \leq d(u) \leq 24$;
\item $d(v)=7$ and $d(v)  \leq d(u)$.
\end{itemize}
\end{pro}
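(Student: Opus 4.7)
The plan is to exhibit, for each of the three degree configurations, a local modification of $T$ that strictly reduces the ABC-index; the minimality of $T$ then forces the contradiction claimed by the proposition. First, I would introduce the transformation: let $c_4$ and $c_2$ be the centers of the $B_4$- and $B_2$-branches (so $d(c_4) = 5$, $d(c_2) = 3$), pick one of the four length-$2$ pendant paths $c_4 - a - p$ attached to $c_4$, and produce $T'$ by deleting the edge $c_4 a$ and adding the edge $c_2 a$. In $T'$, the $B_4$-branch is replaced by a $B_3$-branch at $c_4$ (now of degree $4$), and the $B_2$-branch is replaced by a $B_3$-branch at $c_2$ (now of degree $4$).

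Next, I would compute $\Delta := \ABC(T') - \ABC(T)$ term by term. Only the edges $u c_4$, $v c_2$, and the internal edges of the two modified branches contribute. Because $f(5, 2) = f(4, 2) = f(3, 2) = f(2, 1) = \sqrt{1/2}$, every internal contribution cancels in the difference, leaving
$$\Delta = \bigl[f(d(u), 4) - f(d(u), 5)\bigr] + \bigl[f(d(v), 4) - f(d(v), 3)\bigr].$$
With $P(x) := f(x, 3) - f(x, 4)$ and $Q(x) := f(x, 4) - f(x, 5)$, strict decrease of the ABC-index reduces to the single inequality $Q(d(u)) < P(d(v))$. By Lemma~\ref{appendix-pro-010-2} (applied twice), $P$ and $Q$ are positive and strictly increasing on $[3, \infty)$, so in each case the task reduces to a verification at the largest admissible value of $d(u)$.

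The three verifications are: $Q(12) < P(5)$, $Q(24) < P(6)$, and $Q(d(u)) < P(7)$ for every $d(u) \geq 7$. The first two are clean inequalities between algebraic numbers and can be settled by squaring and clearing denominators — the boundary values $12$ and $24$ are chosen precisely so that the comparison holds by a small positive margin, which explains their appearance in the statement. The main obstacle will be the third case, because no upper bound on $d(u)$ is imposed and the inequality is asymptotically delicate: $\lim_{d(u) \to \infty} Q(d(u)) = \tfrac{1}{2} - \tfrac{1}{\sqrt 5}$ lies very close to $P(7) = \sqrt{8/21} - \sqrt{9/28}$. To handle this, I would combine the basic transformation with additional structural input — for instance, Observation~\ref{obs-DS-branches1} and Lemmas~\ref{lemma-10}, \ref{lemma-10-1} together with the counts on $B_k$-branches from Theorems~\ref{te-no5branches-10}, \ref{thm-20}, and~\ref{thm-30}, all of which constrain $u$'s neighborhood enough to control $d(u)$ or to permit a refined local move. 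Once all three cases are dispatched, $\ABC(T') < \ABC(T)$ delivers the required contradiction with the minimality of $T$.
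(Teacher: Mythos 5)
Your transformation is exactly the paper's transformation $\mathcal{T}$: detach one pendant path of length two from the $B_4$-centre and reattach it at the $B_2$-centre; all internal edge contributions equal $\sqrt{1/2}$ and cancel, leaving $\bigl[f(d(u),4)-f(d(u),5)\bigr]+\bigl[f(d(v),4)-f(d(v),3)\bigr]$, which is precisely the paper's $g_1(d(u),d(v))$. Your handling of the first two bullets (monotonicity of $Q$ and $P$ via Lemmas \ref{appendix-pro-010} and \ref{appendix-pro-010-2}, then a single endpoint check $Q(12)<P(5)$ and $Q(24)<P(6)$) is also the paper's argument, since $13$ and $25$ are exactly the first values of $d(u)$ at which $g_1$ becomes non-negative.

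The gap is the third bullet, which you diagnosed but did not close: since $\lim_{d(u)\to\infty}Q(d(u))=\tfrac12-\tfrac{1}{\sqrt{5}}\approx 0.0528$ exceeds $P(7)=\sqrt{8/21}-\sqrt{9/28}\approx 0.0503$, the inequality $Q(d(u))<P(7)$ in fact fails for every $d(u)\ge 67$, so this single move cannot prove the bullet as literally stated for unbounded $d(u)$, and your proposal to "combine with additional structural input" remains a sketch rather than a proof. You should know, though, that the paper's own proof has exactly the same reach: it records that $67$ is the smallest $d(u)$ with $g_1(d(u),7)\ge 0$ (and, separately, that $g_1<0$ for all $d(v)\ge 8$), so it too establishes the case $d(v)=7$ only for $d(u)\le 66$; the leftover configuration $d(v)=7$, $d(u)\ge 67$ is explicitly listed among the remaining cases in the proof of Theorem~\ref{thm-noB2B4} and is eliminated there by the more elaborate transformations $\mathcal{T}_1$--$\mathcal{T}_6$. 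So your attempt coincides with the paper on everything the paper actually proves at this point; the unbounded third bullet is best read as an overstatement of the proposition (the bound $d(u)\le 66$ is what the proof delivers and what the later theorem relies on), and the hard residual case you deferred is the content of the main theorem, not of this proposition.
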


\begin{proof}
Suppose that $G$ is a minimal-ABC tree that contains a $B_4$-branch and a $B_2$-branch
simultaneously.
Let $u_1$ be the center vertex of a $B_4$-branch, with the parent vertex $u$, and $v_1$ the center vertex of a $B_2$-branch, with the parent vertex $v$.
From Observation \ref{obs-DS-branches1}, we know that $u_1$ occurs before $v_1$ in the breadth-first search of $G$.
Furthermore, by Lemma~\ref{lemma-switching},
it follows that $d(u)\geq d(v) \geq d(u_1)
= 5$.

Here we apply the transformation $\mathcal{T}$ depicted in
Figure~\ref{noB4B2-10}.
\begin{figure}[h]
\begin{center}
\includegraphics[scale=0.90]{./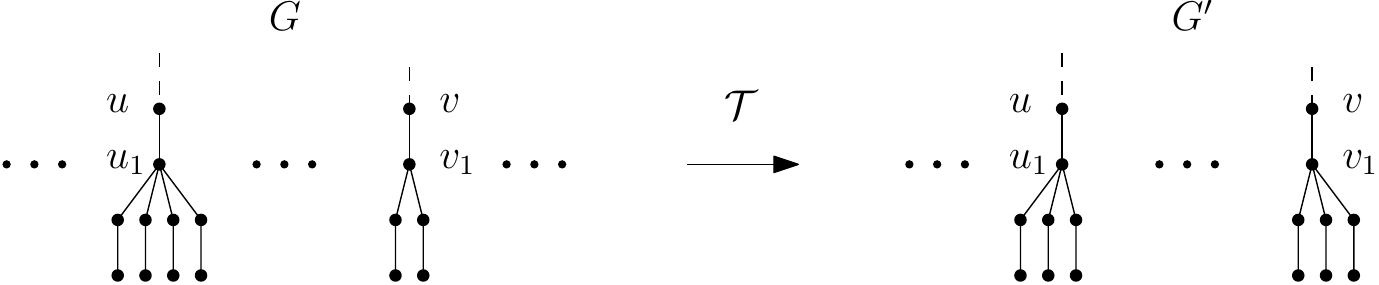}
\caption{ An illustration of the transformation $\mathcal{T}$ from
the proof of Theorem~\ref{thm-noB2B4}.} \label{noB4B2-10}
\end{center}
\end{figure}
After applying $\mathcal{T}$, the change of the ABC index of $G$ is
\beq \label{eq-lemma-B4-70}
ABC(G') - ABC(G) &=& g_1(d(u), d(v)) \nonumber\\
&=& -f(d(u), 5) + f(d(u), 4) -f(d(v), 3) + f(d(v), 4)\, . \nonumber
\eeq

By Lemma~\ref{appendix-pro-010}, $g_1(d(u), d(v))$ increases
in $d(u)$.
For $d(v)=5,6,7$, it can be verified  that $13,25, 67$, respectively, are the smallest values
of $d(u)$ for which $g_1(d(u), d(v))$ is non-negative.

For $d(v)=8$,
we obtain that
\beq \label{eq-lemma-B4-70-8}
g_1(d(u), 8)
&=& -f(d(u), 5) + f(d(u), 4) -f(8, 3) + f(8, 4) \nonumber\\
&\le& \lim_{d(u) \to \infty}  (-f(d(u), 5) + f(d(u), 4)) -f(8, 3) + f(8, 4) \nonumber\\
&=& -\sqrt{\frac{1}{5}}+\sqrt{\frac{1}{4}} -f(8, 3) + f(8, 4) \nonumber\\
&<& 0,
\eeq
i.e., the function $g_1(d(u), 8)$
is negative.

By Lemma~\ref{appendix-pro-010-2}, we have that
$g_1(d(u), d(v))$ decreases in $d(v)$, and thus also
$$
g_1(d(u), d(v)) \le g_1(d(u), 8) < 0
$$
for $d(v)>8$. 

Now the result follows.
\end{proof}

Applying the inverse transformation of the transformation $\mathcal{T}$ depicted in
Figure~\ref{noB4B2-10}, together with the proof of Proposition \ref{pro-B4-B2}, the following proposition is immediate.

\begin{pro} \label{pro-B3-B3}
A minimal-ABC tree cannot contain two $B_3$-branches, attached to two different vertices, simultaneously, where $u$ and $v$ are the two (different) parent vertices of the centers of two $B_3$-branches, when
\begin{itemize}
\item $d(v)=5$ and $d(u) \geq 13$;
\item $d(v)=6$ and $d(u) \geq 25$;
\item $d(v)=7$ and $d(u) \geq 67$.
\end{itemize}
\end{pro}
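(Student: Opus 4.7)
The plan is to mirror the argument of Proposition~\ref{pro-B4-B2} almost verbatim, but run the transformation in reverse. Assuming for contradiction that a minimal-ABC tree $G$ contains two $B_3$-branches with distinct parent vertices $u$ and $v$ of the respective centers $u_1$ and $v_1$, I would first invoke Observation~\ref{obs-DS-branches1} together with Lemma~\ref{lemma-switching} to order them as $d(u) \ge d(v)$.

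Next, I would apply the inverse $\mathcal{T}^{-1}$ of the transformation depicted in Figure~\ref{noB4B2-10}: detach one of the two pendant paths of length $2$ hanging from $v_1$ and reattach it to $u_1$. This raises $d(u_1)$ from $4$ to $5$, producing a $B_4$-branch at $u_1$, and lowers $d(v_1)$ from $4$ to $3$, producing a $B_2$-branch at $v_1$; all other vertex degrees remain unchanged. Since only the edges $u u_1$ and $v v_1$ contribute altered terms to the ABC sum, the change in the invariant is exactly
\[
\ABC(G') - \ABC(G) \;=\; -f(d(u),4)+f(d(u),5)-f(d(v),4)+f(d(v),3) \;=\; -g_1(d(u),d(v)),
\]
i.e.\ the negative of the function $g_1$ already analyzed inside the proof of Proposition~\ref{pro-B4-B2}.

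It then remains to import, unchanged, the conclusions already obtained there: by Lemma~\ref{appendix-pro-010} the function $g_1(\cdot,d(v))$ is strictly increasing in its first argument, and the case analysis in Proposition~\ref{pro-B4-B2} establishes that $g_1(d(u),d(v)) \ge 0$ as soon as $d(u) \ge 13, 25, 67$ for $d(v) = 5, 6, 7$ respectively. In each of the three listed regimes this gives $\ABC(G') < \ABC(G)$, contradicting the minimality of $G$.

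The only subtlety worth flagging is the strict-versus-weak inequality at the threshold integers $d(u) = 13, 25, 67$: Proposition~\ref{pro-B4-B2} only asserts non-negativity of $g_1$ at these boundary values, whereas the contradiction here requires strict positivity. I would handle this by a direct numerical check at the three boundary pairs, after which strict monotonicity in $d(u)$ propagates strict positivity to every larger $d(u)$, closing the argument.
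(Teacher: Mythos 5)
Your proof is correct and is essentially the paper's own argument: the paper obtains Proposition~\ref{pro-B3-B3} precisely by applying the inverse of the transformation $\mathcal{T}$ of Figure~\ref{noB4B2-10} and reusing the analysis of $g_1$ from the proof of Proposition~\ref{pro-B4-B2}, and your identity $\ABC(G')-\ABC(G)=-g_1(d(u),d(v))$ is right because only the edges $uu_1$ and $vv_1$ change their contribution (every edge incident to a degree-$2$ vertex contributes $\sqrt{1/2}$ regardless of the other endpoint's degree). Your flag about strictness at the threshold values is well taken and resolves as you suggest---direct evaluation gives $g_1(13,5)>0$, $g_1(25,6)>0$ and $g_1(67,7)>0$, so monotonicity in $d(u)$ yields a strictly negative change and hence the contradiction; the only cosmetic slip is that the $B_3$-center $v_1$ carries three, not two, pendant paths of length $2$.
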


We now state and prove the main result of this section.

\begin{te} \label{thm-noB2B4}
A minimal-ABC tree cannot contain a $B_4$-branch and a $B_2$-branch
simultaneously.
\end{te}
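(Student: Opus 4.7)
The plan is to argue by contradiction. Let $G$ be a minimal-ABC tree containing both a $B_4$-branch (with parent $u$ and center $u_1$, so $d(u_1)=5$) and a $B_2$-branch (with parent $v$ and center $v_1$, so $d(v_1)=3$). The case $u=v$ is ruled out immediately by Lemma \ref{lemma-10}(b), and Observation \ref{obs-DS-branches1} together with Lemma \ref{lemma-switching} allows one to assume $d(u) \geq d(v) \geq d(u_1) = 5$.

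Proposition \ref{pro-B4-B2} disposes of every degree pair $(d(v), d(u))$ except $d(v) = 5$ with $d(u) \geq 13$ and $d(v) = 6$ with $d(u) \geq 25$: in each case it covers, the transformation $\mathcal{T}$ of Figure \ref{noB4B2-10} strictly decreases the ABC index, contradicting the minimality of $G$. These residual regimes are precisely the ones in which $g_1(d(u), d(v)) \geq 0$ and $\mathcal{T}$ alone is no longer effective.

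For the two residual regimes, the plan is to combine $\mathcal{T}$ with Proposition \ref{pro-B3-B3}. Apply $\mathcal{T}$ to $G$ to obtain a tree $G^{\mathcal{T}}$ in which the $B_4$-branch at $u$ and the $B_2$-branch at $v$ have been converted into two $B_3$-branches at the distinct parents $u$ and $v$, with $\mathrm{ABC}(G^{\mathcal{T}}) = \mathrm{ABC}(G) + g_1(d(u), d(v))$. The new pair of $B_3$-branches satisfies the hypothesis of Proposition \ref{pro-B3-B3}, which provides an inverse transformation strictly decreasing the ABC index. Since a naive application of this inverse to the same pair of $B_3$-branches merely undoes $\mathcal{T}$, one exploits the large degree of $u$: because $d(u) \geq 13$, Lemma \ref{lemma-10}(a,b) excludes $B_1$- and $B_2$-children at $u$, and Theorem \ref{thm-20} caps the number of $B_4$-branch siblings of $u_1$ at three, so $u$ must possess at least one further $B_3$-branch child $u' \neq u_1$. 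Applying the inverse transformation from Proposition \ref{pro-B3-B3} to the pair consisting of the $B_3$-branch at $u'$ and the newly $B_3$-ified $v_1$ produces a tree $H$ that is structurally different from $G$.

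The principal obstacle I anticipate is that the bare ABC-change of $\mathcal{T}$ on $(u_1,v_1)$ and that of Proposition \ref{pro-B3-B3}'s inverse on $(u',v_1)$ equal $\pm g_1(d(u),d(v))$ and cancel exactly, yielding $\mathrm{ABC}(H) = \mathrm{ABC}(G)$ rather than the strict inequality needed. To promote this to a strict contradiction, one expects either to iterate the process---exhausting the additional $B_3$-branch children of $u$ until a tree forbidden by an earlier structural result is reached---or to invoke auxiliary constraints such as Theorem \ref{te-B2-10} (which, given the presence of the $B_2$-branch, forces $G$ to contain no pendant path of length $3$ and thereby strongly restricts its backbone) together with Lemma \ref{lemma-10-1}. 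Making one of these routes rigorous, by carefully tracking the strict case of Lemma \ref{lemma-switching} when $d(u) > d(v)$, is the technical heart of the proof.
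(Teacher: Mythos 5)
Your reduction to the residual degree regimes via Proposition \ref{pro-B4-B2} matches the paper's first step, but the way you propose to finish is not viable. First, a bookkeeping point: the residual regimes are $d(v)=5,\,d(u)\ge 13$; $d(v)=6,\,d(u)\ge 25$; \emph{and} $d(v)=7,\,d(u)\ge 67$ (the proof of Proposition \ref{pro-B4-B2} only shows $g_1<0$ for $d(u)\le 66$ when $d(v)=7$), so you have dropped a case. More seriously, your main device collapses. You assume that $u$ must have a further $B_3$-branch child $u'\neq u_1$, but nothing forces the other children of $u$ to be centers of $B_k$-branches at all: they may be roots of larger subtrees, which is exactly why the paper has to argue via breadth-first-search ordering (Observation \ref{obs-DS-branches1}), Proposition \ref{pro-B3-B3} and Theorem \ref{thm-20} when it excludes $B_3$-branches at $v$. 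And even granting such a $u'$, applying $\mathcal{T}$ to the pair $(u_1,v_1)$ and then the inverse transformation to the pair $(u',v_1)$ does not produce a ``structurally different'' tree: since $u_1$ and $u'$ are both children of $u$, the result is isomorphic to $G$ (the roles of $u_1$ and $u'$ are simply swapped), so the exact cancellation $\mathrm{ABC}(H)=\mathrm{ABC}(G)$ that you flag is not a technicality to be repaired but a sign that the maneuver is vacuous; iterating it cannot create a strict decrease.

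The paper's route through the hard regimes is genuinely different and cannot be skipped. It first shows that in the residual cases no $B_3$- or $B_4$-branch can be attached to $v$ (using Lemma \ref{lemma-10}(b), Proposition \ref{pro-B3-B3}, Theorem \ref{thm-20} and the BFS ordering, with $v$ chosen as the \emph{last} parent of a $B_2$-branch), so that every child branch of $v$ is a $B_1$- or $B_2$-branch, giving $d(v)=n_1+n_2+1$ with $n_1\le 4$ by Theorem \ref{thm-30}. It then runs a case analysis over $n_2=1,\dots,6$ (each with subcases (a)--(d) on how $u,v,u_p,v_p$ are related), applying for each a tailored transformation $\mathcal{T}_1,\dots,\mathcal{T}_6$ that moves pendant-path material from the neighborhood of $v$ onto $u$, and verifies strict negativity of the ABC change by explicit limiting and numerical estimates (sometimes needing the refined bound $\sum_i(-f(d(x_i),d(u))+f(d(x_i),d(u)+c))<(d(u)-2)(-f(4,d(u))+f(4,d(u)+c))$). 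None of this is present in your sketch, so the technical heart of the theorem is missing rather than merely deferred.
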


\begin{proof}
Suppose that $G$ is a minimal-ABC tree that contains a $B_4$-branch and a $B_2$-branch
simultaneously. Let $u_1$ be the center vertex of a $B_4$-branch, with the parent vertex $u$ and the grandparent vertex
$u_p$ (if existed).
Let $v$ be the last vertex in the breadth-first search of $G$, which is a parent of a $B_2$-branch.
Denote by $v_1$ the center vertex of that $B_2$-branch and $v_p$ the parent of $v$.
Here  also, by Observation \ref{obs-DS-branches1}, we have $u_1$ occurs before $v_1$ in the breadth-first search of $G$.
Furthermore, by Lemma~\ref{lemma-switching},
it follows that $d(u)\geq d(v) \geq d(u_1) \geq d(v_1)$.
Notice that the existence of $B_4$- and $B_2$-branches in $G$ implies that every pendant path in $G$ is of length $2$, from Theorem \ref{te-B2-10}.

By Lemma~\ref{lemma-10}(b), $u_1$ and $v_1$ cannot have a common parent
vertex, i.e., $u \ne v$. And by Proposition \ref{pro-B4-B2}, the following cases remain to be considered:
\begin{itemize}
\item $d(v)=5$ and $d(u) \geq 13$;
\item $d(v)=6$ and $d(u) \geq 25$;
\item $d(v)=7$ and $d(u) \geq 67$.
\end{itemize}



By Lemma~\ref{lemma-10}(b), $v$ cannot be a parent vertex of a
$B_4$-branch. And we will show that $v$ cannot be a parent vertex of a
$B_3$-branch, either.

Suppose to the contrary that there is a $B_3$-branch attached to $v$.
From Proposition \ref{pro-B3-B3}, no $B_3$-branch is attached to $u$.

Suppose that every branch attached to $u$ is a $B_k$-branch, except the one containing $v$ (not necessarily exist such branch). By Lemma \ref{lemma-switching} and Lemma~\ref{lemma-10}(b), we know that every branch attached to $u$ is actually a $B_4$-branch. However, Theorem \ref{thm-20} claims that there are at most four $B_4$-branches in $G$, which is a contradiction to $d(u) \ge 13$.

Suppose that there is a branch attached to $u$ which is not a $B_k$-branch, denote by $w$ the child of $u$ in such branch.
If $v$ occurs before $w$ in the breadth-first search of $G$, then by Observation \ref{obs-DS-branches1}, there must have $B_2$-branch attached to $w$, which is a contradiction to the choice of $v$.
If $w$ occurs before $v$ in the breadth-first search of $G$ ($v$ is also a child of $u$), then by Observation \ref{obs-DS-branches1}, either there exists $B_3$-branch attached to $w$, which is a contradiction to Proposition \ref{pro-B3-B3}, or there are at least five $B_4$-branch in $G$, which is a contradiction to Theorem \ref{thm-20}.

So no $B_3$-branch is attached to $v$, i.e., every branch attached to $v$ is $B_1$- or $B_2$-branch.



Let $n_1$ and $n_2$ be the number of $B_1$- and
$B_2$-branches, correspondingly,
that have $v$ as the parent vertex. It holds that
$d(v)=n_1+n_2+1$, with $n_1 \geq 0$ and $n_2 \geq 1$.
Moreover, by Theorem~\ref{thm-30}, $n_1$ can be at most $4$.

In the rest of the proof, we will consider the remaining cases when
$d(v)=5,6,7$. Further, we distinguish six cases with respect to the
value of $n_2$: $n_2=1,2,3,4,5,6$.

In addition, notice that $v$ cannot occur before
$u$ in the breadth-first search of $G$, and recalling that $u \ne v$
from Lemma~\ref{lemma-10}(b), thus there are four possibilities with
respect to the relationship between vertices $u,v, u_p$ and $v_p$,
that we are interested in:
\begin{itemize}
\item[(a)] $u,v, u_p$ and $v_p$ are all different vertices;
\item[(b)]  $u_p$ and $v_p$ denote the same vertex (i.e., $u_p=v_p$);
\item[(c)]  $u$ is the parent of $v$ (i.e., $u = v_p$), and $u$ is not the root vertex of $G$;
\item[(d)]  $u$ is the parent of $v$ (i.e., $u = v_p$), and $u$ is the root vertex of $G$.
\end{itemize}

In the analysis of the following cases, first we will consider
case (a). With the remaining cases (b), (c) and (d), we will proceed similarly.

\bigskip
\noindent {\it Case $1$}. $n_2=6$.

First we consider case (a), i.e., $u, v, u_p$ and
$v_p$ are pairwise distinct vertices.
Notice that in this case $d(v) = 7$ and $n_1=0$.


Here we apply the transformation $\mathcal{T}_{1}$ depicted in
Figure~\ref{noB4B2-20}.
\begin{figure}[h!]
\begin{center}
\includegraphics[scale=0.90]{./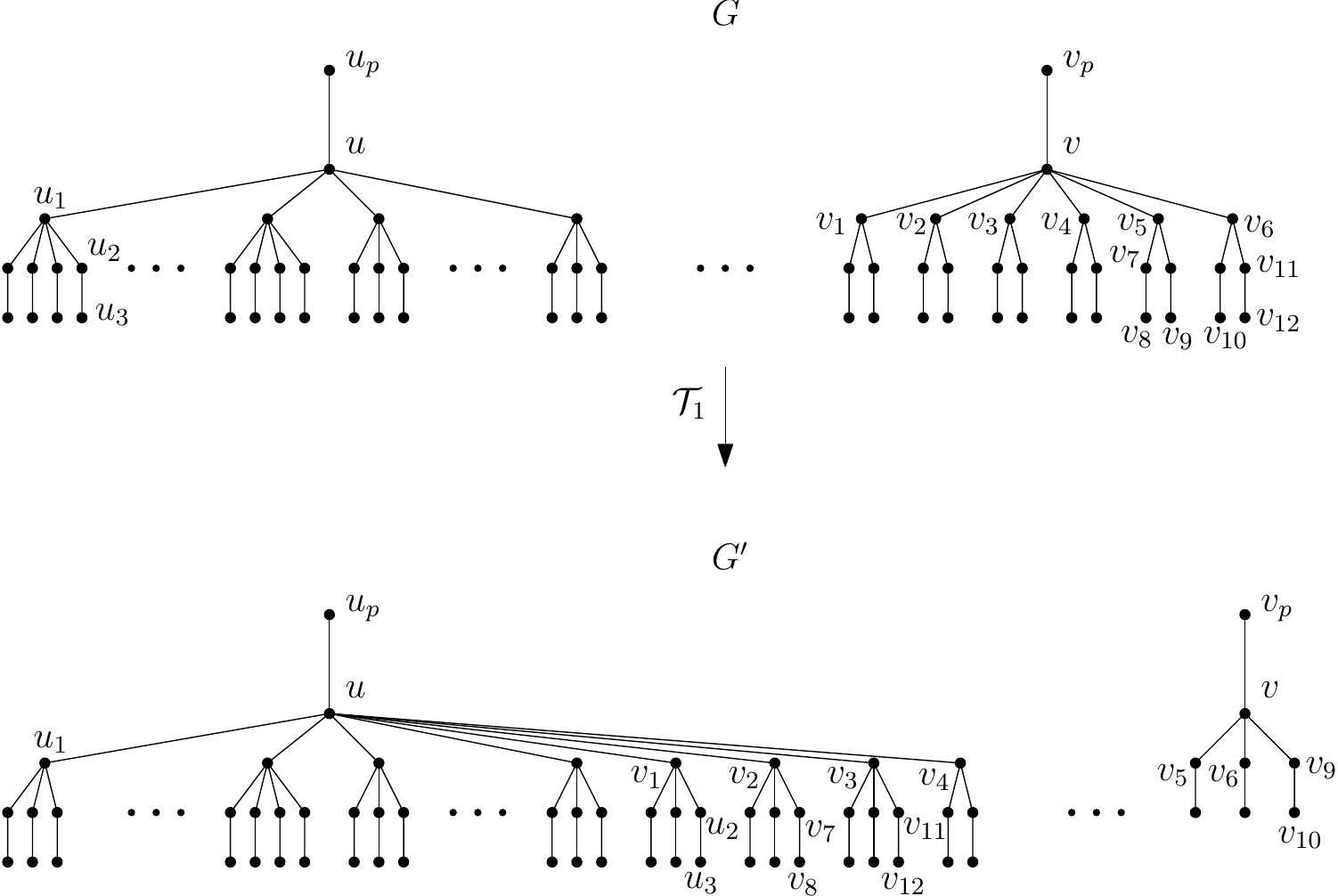}
\caption{An illustration of the transformation $\mathcal{T}_{1}$
from the proof of Theorem~\ref{thm-noB2B4} - Case $1$.}
\label{noB4B2-20}
\end{center}
\end{figure}

After applying $\mathcal{T}_{1}$, the degree of the vertex
$u$ increases by $4$, the degrees of the vertices $v_1, v_2,
v_3$ and $v_9$ increase by $1$, the degree of $v$ decreases by $3$,
while the degrees of $u_1, v_5$, $v_6$, one child of $v_5$ and one
child of $v_6$ decrease by $1$, and the rest of the vertices do not
change their degrees.

The change of the ABC index after applying $\mathcal{T}_{1}$ is
\beq \label{eq-te1-10}
ABC(G') - ABC(G) & = & \sum_{i=1}^{d(u)-1} (-f(d(x_i),d(u))+f(d(x_i),d(u)+4))\nonumber \\
& & -f(5,d(u))+f(4,d(u)+4)  \nonumber \\
&& +3(-f(3,7)+f(4,d(u)+4))   \nonumber  \\
&& -f(3,7)+f(3,d(u)+4)   \nonumber  \\
&& -f(d(v_p),7)+f(d(v_p),4)   \nonumber \\
&& +2(-f(3,7)+f(2,4)),
\eeq
where $x_i$, for $i=1, \dots, d(u)-1$, are the neighbors of $u$ in $G$,
except $u_1$.

Clearly,
$$
-f(d(x_i),d(u))+f(d(x_i),d(u)+4) < 0
$$
for all $i=1, \dots, d(u)-1$,
and both $f(4,d(u)+4)$ and $f(3,d(u)+4)$ decrease in $d(u)$. On the
other hand, by Lemmas \ref{appendix-pro-010-2} and
\ref{appendix-pro-010}, $-f(5,d(u))+f(4,d(u)+4)$ and
$-f(d(v_p),7)+f(d(v_p),4)$ increase in $d(u)$ and $d(v_p)$,
respectively, which implies that
$$
-f(5,d(u))+f(4,d(u)+4) < \lim_{d(u) \to \infty}
(-f(5,d(u))+f(4,d(u)+4)) = -\sqrt{\frac{1}{5}}+\sqrt{\frac{1}{4}}
$$
and
$$
-f(d(v_p),7)+f(d(v_p),4) < \lim_{d(v_p) \to \infty}
(-f(d(v_p),7)+f(d(v_p),4) ) =
-\sqrt{\frac{1}{7}}+\sqrt{\frac{1}{4}}.
$$

Now we have \beq \label{eq-te1-10-1} ABC(G') - ABC(G)
&< & -\sqrt{\frac{1}{5}}+\sqrt{\frac{1}{4}}  \nonumber  \\
&& +3(-f(3,7)+f(4,66+4))\nonumber  \\
&&-f(3,7)+f(3,66+4)  \nonumber  \\
&&  -\sqrt{\frac{1}{7}}+\sqrt{\frac{1}{4}}\nonumber  \\
&& +2(-f(3,7)+f(2,4)) \nonumber \\
&\approx & -0.0115077, \eeq since $d(u)>66$, when $d(v) = 7$.
Thus, we have shown that the change of the ABC index after applying
the transformation $\mathcal{T}_{1}$ is negative, which is a
contradiction to the initial assumption that $G$ is a minimal-ABC tree.

In the above deduction for case (a), notice that the term on
$d(u_p)$ may be neglected, thus, in case (b), i.e., when $u_p=v_p$, we
may obtain the same negative upper bound for the change of the ABC
index after applying $\mathcal{T}_{1}$ as in (\ref{eq-te1-10-1}).

For case (c), i.e., when $u = v_p$ and $u$ is not the root vertex of
$G$, we can obtain the same upper bound for the change of the ABC
index after applying $\mathcal{T}_{1}$, just by replacing $d(v_p)$
with $d(u)$ in (\ref{eq-te1-10}), i.e., \beq \label{eq-te1-10-2}
ABC(G') - ABC(G) & = & \sum_{i=1}^{d(u)-2} (-f(d(x_i),d(u))+f(d(x_i),d(u)+4))\nonumber \\
& & -f(5,d(u))+f(4,d(u)+4)  \nonumber \\
&& +3(-f(3,7)+f(4,d(u)+4))   \nonumber  \\
&& -f(3,7)+f(3,d(u)+4)   \nonumber  \\
&& -f(d(u),7)+f(d(u),4)   \nonumber \\
&& +2(-f(3,7)+f(2,4)). \eeq Furthermore, similar to the argument
regarding (\ref{eq-te1-10-1}), we can obtain the same negative upper
bound for $ABC(G') - ABC(G)$ as in (\ref{eq-te1-10-1}).

Notice that (\ref{eq-te1-10-2}) is independent of that if $u$ is the
root vertex of $G$ (i.e., the existence of the parent $u_p$ of $u$),
thus the upper bounds in cases (c) and (d) are actually the
same.

%
%

\bigskip
\noindent {\it Case $2$}. $n_2=5$.

First we consider case (a), i.e., $u, v, u_p$ and
$v_p$ are pairwise distinct vertices. In this case, either $d(v)=7$ (i.e., $n_1=1$) or $d(v)=6$ (i.e., $n_1=0$).

Here, we apply the transformation $\mathcal{T}_{2}$ depicted in
Figure~\ref{noB4B2-30}.
\begin{figure}[h!]
\begin{center}
\includegraphics[scale=0.90]{./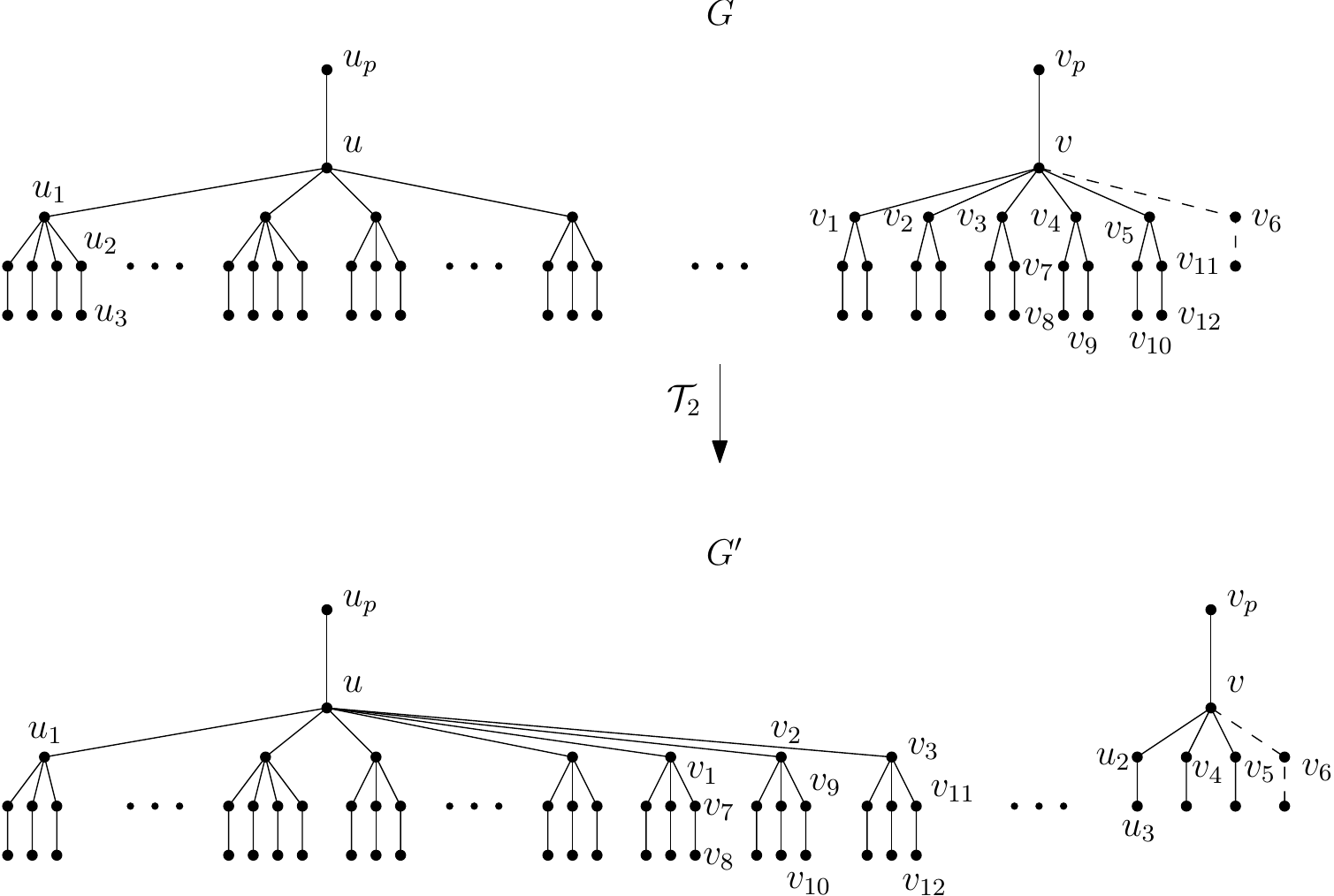}
\caption{An illustration of the transformation $\mathcal{T}_{2}$
from the proof of Theorem~\ref{thm-noB2B4} - Case $2$.  The dashed
lines indicate that the attached branches may be optional.} \label{noB4B2-30}
\end{center}
\end{figure}
After applying $\mathcal{T}_{2}$, the degree of the vertex $u$
increases by $3$, the degrees of the vertices $v_1, v_2, v_3$ and
$v_{9}$ increase by $1$, the degree of $v$ decreases by $2$, while
the degrees of $u_1, v_4$, $v_5$, one child of $v_4$ and one child
of $v_5$ decrease by $1$, and the rest of the vertices do not change
their degrees.

The change of the ABC index after applying $\mathcal{T}_{2}$ is
\beq \label{eq-te1-40}
ABC(G') - ABC(G) & = & \sum_{i=1}^{d(u)-1} (-f(d(x_i),d(u))+f(d(x_i),d(u)+3))  \nonumber \\
& & -f(5,d(u))+f(4,d(u)+3)  \nonumber \\
&& +3(-f(3,d(v))+f(4,d(u)+3))   \nonumber  \\
&& -f(d(v_p),d(v))+f(d(v_p),d(v)-2)   \nonumber \\
&& +2(-f(3,d(v))+f(2,4)), \nonumber \eeq
where $x_i$, for $i=1,\dots,d(u)-1$, are all the neighbors of $u$ in
$G$, except $u_1$.

A similar analysis as in Case~$1$ shows that \beq
\label{eq-te1-40-1}
ABC(G') - ABC(G) & < &  -\sqrt{\frac{1}{5}} + \sqrt{\frac{1}{4}}  \nonumber  \\
&& +3(-f(3,d(v))+f(4,d(u)+3))  \nonumber  \\
&& -\sqrt{\frac{1}{d(v)}} + \sqrt{\frac{1}{d(v)-2}}\nonumber  \\
&&+2(-f(3,d(v))+f(2,4)).
\eeq

Recall that here $d(v)=6$ or $d(v)=7$, and then $d(u) \geq 25$ or  $d(u) \geq 67$, respectively.
Observe that the right-hand side of (\ref{eq-te1-40-1}) decreases in $d(u)$.
Thus, for $d(v) = 6$, we obtain the upper bound on (\ref{eq-te1-40-1}) when $d(u) =25$,
which is $-0.00664864$.
In the case of $d(v)=7$, the upper bound on (\ref{eq-te1-40-1}) is obtained when $d(u) =67$
and it is  $-0.0285403$. 


The proofs for cases (b), (c) and (d) are similar to case (a), and
the detailed illustration can be referred to that in Case $1$.

\bigskip
\noindent {\it Case $3$}. $n_2=4$.

First we consider case (a), i.e., $u, v, u_p$ and
$v_p$ are all different vertices.
In this case, $d(v)=7, 6, 5$ and $n_1=2,1,0$, correspondingly.

Here, we apply the transformation $\mathcal{T}_{3}$ depicted in
Figure~\ref{noB4B2-40}.
\begin{figure}[h!]
\begin{center}
\includegraphics[scale=0.90]{./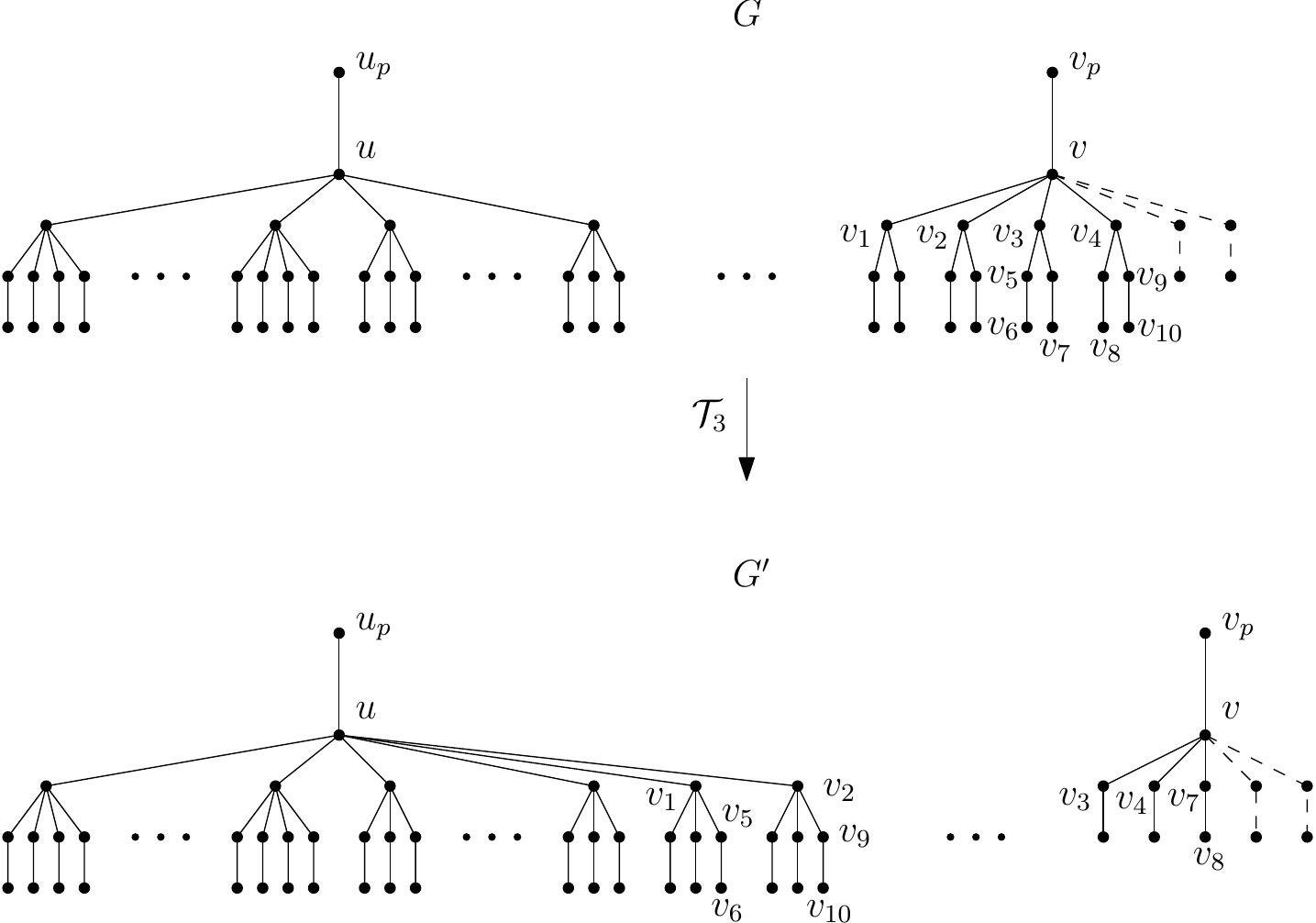}
\caption{An illustration of the transformation $\mathcal{T}_{3}$
from the proof of Theorem~\ref{thm-noB2B4} - Case $3$. The dashed
lines indicate that the attached branches may be optional.}
\label{noB4B2-40}
\end{center}
\end{figure}
After applying $\mathcal{T}_{3}$, the degree of the vertex $u$
increases by $2$, the degrees of the vertices $v_1, v_2$ and
$v_{7}$ increase by $1$, the degrees of $v, v_3, v_4$, one child of
$v_3$ and one child of $v_4$ decrease by $1$, and the rest of the
vertices do not change their degrees.

The change of the ABC index after applying $\mathcal{T}_{3}$ is
\beq \label{eq-te1-50}
ABC(G') - ABC(G) & = &\sum_{i=1}^{d(u)} (-f(d(x_i),d(u))+f(d(x_i),d(u)+2)) \nonumber \\
&& +2(-f(3,d(v))+f(4,d(u)+2))   \nonumber  \\
&& -f(d(v_p),d(v))+f(d(v_p),d(v)-1)   \nonumber \\
&& +2(-f(3,d(v))+f(2,4)),
\eeq
where $x_i$, for $i=1, \dots, d(u)$, are all the neighbors of $u$ in $G$.

Similar analysis as in Case~$1$ shows that \beq \label{eq-te1-50-1}
ABC(G') - ABC(G) & < & 2(-f(3,d(v))+f(4,d(u)+2)) \nonumber  \\
&& -\sqrt{\frac{1}{d(v)}} + \sqrt{\frac{1}{d(v)-1}}\nonumber  \\
&& +2(-f(3,d(v))+f(2,4)).
\eeq
Clearly, the right-hand side of
(\ref{eq-te1-50-1}) decreases in $d(u)$, so the negative change of
the ABC index follows again from direct calculation, except the
cases  $d (v) = 5$ and $d (u) = 13$.
For this case, we need only to analyze the following term in
(\ref{eq-te1-50}):
$$
\sum_{i=1}^{d(u)} (-f(d(x_i),d(u))+f(d(x_i),d(u)+2)).
$$
Note that, from Lemma \ref{appendix-pro-010-2},
$-f(d(x_i),d(u))+f(d(x_i),d(u)+2)$ decreases in $d(x_i)$, and from
Lemmas \ref{lemma-switching} and \ref{lemma-10}(b), the possible minimum degree
among all the neighbors of $u$ in $G$, different from $u_1$ and $u_p$, is
$4$. Thus
$$
\sum_{i=1}^{d(u)} (-f(d(x_i),d(u))+f(d(x_i),d(u)+2)) < (d(u)-2)
(-f(4,d(u))+f(4,d(u)+2)).
$$
Now, it follows that \beq
ABC(G') - ABC(G) & < & (d(u)-2) (-f(4,d(u))+f(4,d(u)+2)) \nonumber \\
&& +2(-f(3,d(v))+f(4,d(u)+2))   \nonumber  \\
&& -\sqrt{\frac{1}{d(v)}} + \sqrt{\frac{1}{d(v)-1}}   \nonumber \\
&& +2(-f(3,d(v))+f(2,4)). \nonumber \eeq
Subsequently a negative upper bound
follows from direct calculation, for $d (v) = 5$ and $d (u) = 13$.

The proofs for cases (b), (c) and (d) are similar to that in case
(a), and the detailed illustration can be referred to that in Case $1$.

\bigskip
\noindent {\it Case $4$}. $n_2=3$.

First we consider  case (a), i.e., $u, v, u_p$ and
$v_p$ are pairwise distinct vertices.
In this case, $d(v)=7, 6, 5$ and $n_1=3,2,1$, correspondingly.
We distinguish two subcases regarding the degree of
the vertex $v$.

\bigskip
\noindent {\it Subcase $4.1$}. $d(v)=5,6$.

\noindent Here, we apply the transformation $\mathcal{T}_{41}$
depicted in Figure~\ref{noB4B2-50-1}.
\begin{figure}[h!]
\begin{center}
\includegraphics[scale=0.90]{./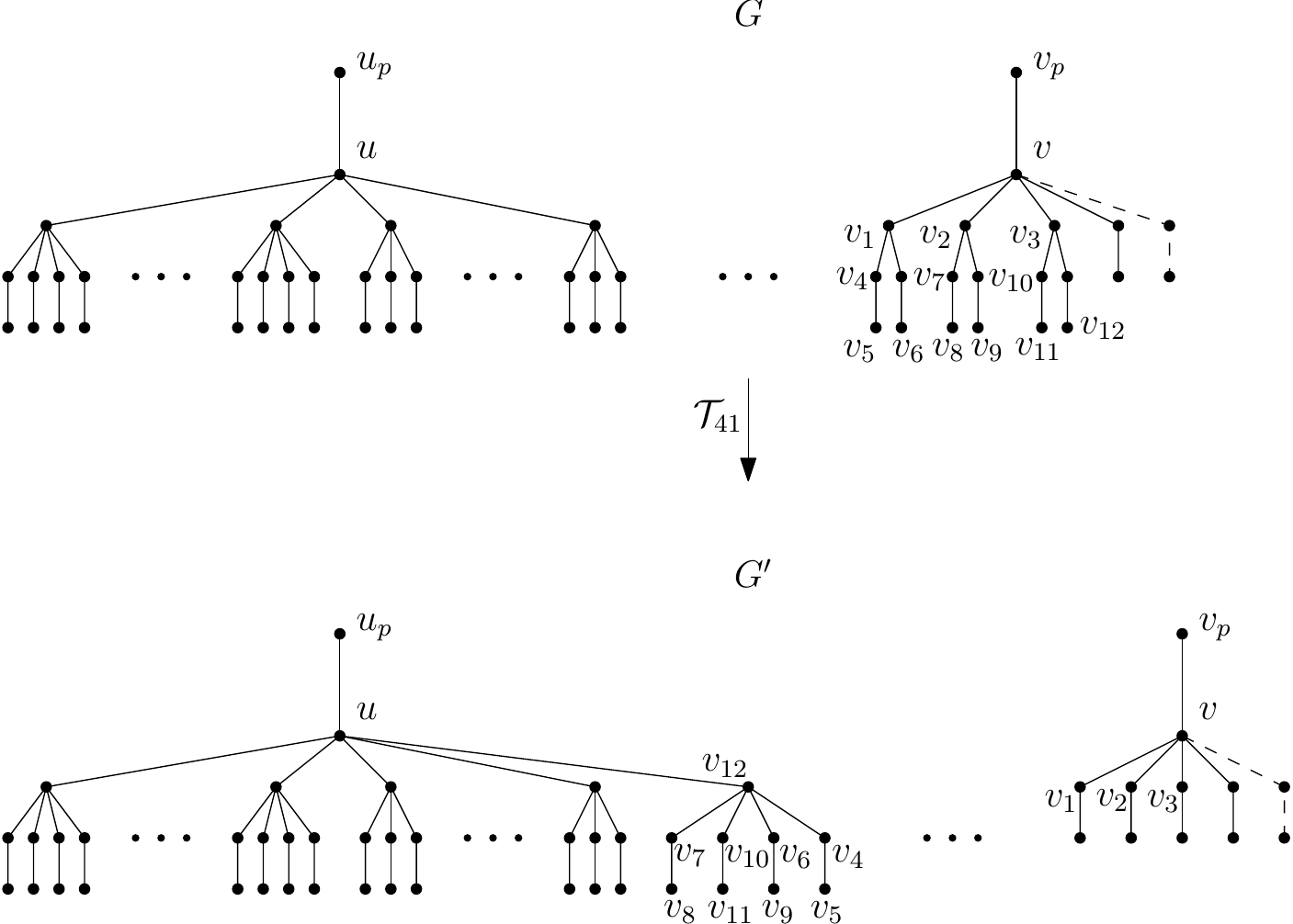}
\caption{Transformation $\mathcal{T}_{41}$  from the proof of Theorem~\ref{thm-noB2B4} - Subcase $4.1$,
which results in a negative change of the ABC index  for $d(v)=5,6$.
}
\label{noB4B2-50-1}
\end{center}
\end{figure}
After applying $\mathcal{T}_{41}$, the degrees of the vertices $u$
and $v_6$ increase by $1$, the vertex $v_{12}$ increases its degree
from $1$ to $5$, while $v_1, v_2, v_3$ and one child from each decrease their degrees by $1$, the rest of the vertices do not
change their degrees.

The change of the ABC index after applying $\mathcal{T}_{41}$ is
\beq \label{eq-te1-60}
ABC(G') - ABC(G)
&= & \sum_{i=1}^{d(u)} (-f(d(x_i),d(u))+f(d(x_i),d(u)+1))  \nonumber \\
&& -f(3,d(v))+f(5,d(u)+1) \nonumber \\
&&+2(-f(3,d(v))+f(2,d(v))),
\eeq
where $x_i$, for $i=1, \dots, d(u)$, are all the neighbors of $u$ in $G$.


Clearly,
$$
-f(d(x_i),d(u))+f(d(x_i),d(u)+1) < 0
$$
for all $i=1, \dots, d(u)$, thus
one can obtain that
\beq \label{eq-te1-60-1}
ABC(G') - ABC(G) & < & -f(3,d(v))+f(5,d(u)+1) \nonumber \\
&&+2(-f(3,d(v))+f(2,d(v))).
\eeq
Notice that the right-hand side of
(\ref{eq-te1-60-1}) decreases in $d(u)$, thus the negative change of the ABC
index follows, except the cases where
\begin{itemize}
\item $d (v) = 5$ and $13 \le d (u) \le 16$;
\item $d (v) = 6$ and $25 \le d (u) \le 69$.
\end{itemize}

For the remaining cases, we need only to consider
(\ref{eq-te1-60}), in particular the term
$$
\sum_{i=1}^{d(u)} (-f(d(x_i),d(u))+f(d(x_i),d(u)+1)).
$$
Note that, from Lemma \ref{appendix-pro-010-2},
$-f(d(x_i),d(u))+f(d(x_i),d(u)+1)$ decreases in $d(x_i)$, and from
Lemmas \ref{lemma-switching} and \ref{lemma-10}(b), the possible minimum degree
among all the neighbors of $u$ in $G$, different from $u_1$ and $u_p$, is
$4$. Hence
$$
\sum_{i=1}^{d(u)} (-f(d(x_i),d(u))+f(d(x_i),d(u)+1)) < (d(u)-2)
(-f(4,d(u))+f(4,d(u)+1)).
$$
Now, it follows that
\beq
ABC(G') - ABC(G) & < & (d(u)-2) (-f(4,d(u))+f(4,d(u)+1)) \nonumber \\
&& -f(3,d(v))+f(5,d(u)+1)\nonumber \\
&& +2(-f(3,d(v))+f(2,d(v))).    \nonumber
\eeq
Consequently we get a negative
upper bound by direct calculation.

\bigskip
\noindent {\it Subcase $4.2$}. $d(v)=7$.

In this case, the number of the $B_1$-branches is $3$, i.e., $n_1 = 3$.
%
Here, we apply
the transformation $\mathcal{T}_{42}$ depicted in
Figure~\ref{noB4B2-50-2}.
\begin{figure}[h!]
\begin{center}
\includegraphics[scale=0.90]{./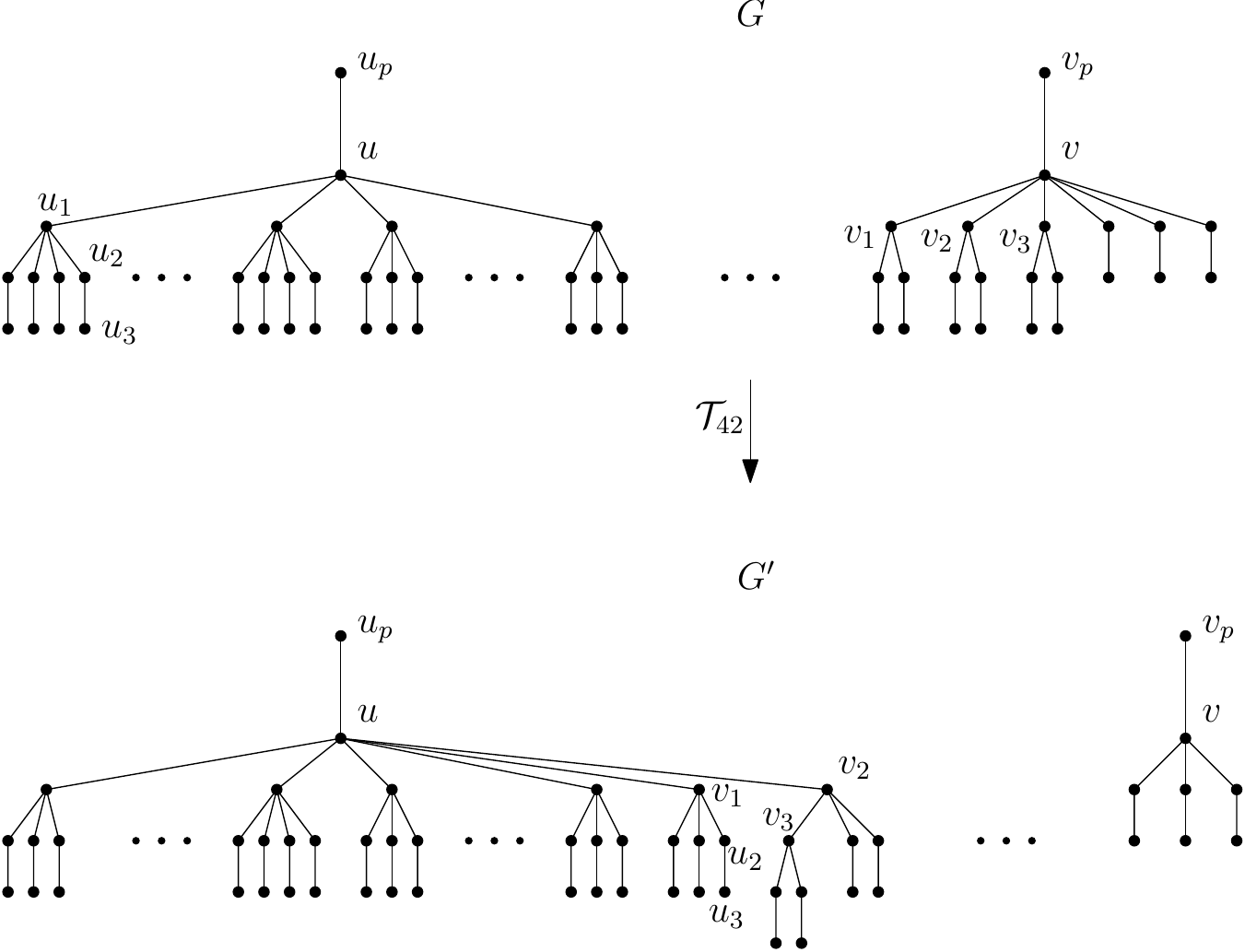}
\caption{Transformation $\mathcal{T}_{42}$ from the proof of
Theorem~\ref{thm-noB2B4} -  Subcase $4.2$ which results in a
negative change of the ABC index  for $d(v)=7$.} \label{noB4B2-50-2}
\end{center}
\end{figure}
%
%
After applying $\mathcal{T}_{42}$, the degree of the vertex $u$
increases by $2$, the degrees of $v_1$ and $v_2$ increase from $3$ to
$4$, the degree of $v$ after applying $\mathcal{T}_{42}$ is $4$, the degree of $u_1$  decreases from $5$ to $4$, and the rest of the
vertices do not change their degrees.

The change of the ABC index after applying $\mathcal{T}_{42}$ is
\beq \label{eq-te1-70}
ABC(G') - ABC(G) & =& \sum_{i=1}^{d(u)-1} (-f(d(x_i),d(u))+f(d(x_i),d(u)+2))  \nonumber \\
&& -f(5,d(u))+f(4,d(u)+2) \nonumber \\
&&+2(-f(3,7)+f(4,d(u)+2))   \nonumber  \\
&&-f (d(v_p),7)+f(d(v_p),4)  \nonumber \\
&&-f(3,7)+f(4,3), \eeq
where $x_i$, for $i=1, \dots, d(u)-1$, are the neighbors of $u$ in $G$,
except $u_1$.
%
An analogous analysis to Case~$1$ shows that
\beq \label{eq-te1-70-1}
ABC(G') - ABC(G) & < & -\sqrt{\frac{1}{5}} + \sqrt{\frac{1}{4}}\nonumber \\
&&+2(-f(3,7)+f(4,d(u)+2))   \nonumber  \\
&& -\sqrt{\frac{1}{7}} + \sqrt{\frac{1}{4}} \nonumber \\
&& -f(3,7)+f(4,3).
\eeq
Clearly, the
right-hand side of (\ref{eq-te1-70-1}) decreases in $d(u)$, so the
negative change of the ABC index follows.

%
%

The proofs for cases (b), (c) and (d) are similar to case (a), and
the detailed illustration can be referred to that in Case $1$.

\bigskip
\noindent {\it Case $5$}. $n_2=2$.

First we consider  case (a).
In this case, since $d(v)=7, 6, 5$, we have that $n_1=4,3,2$, correspondingly.
%
%
%
\noindent Here, we apply the transformation
$\mathcal{T}_{5}$ depicted in Figure~\ref{noB4B2-60-1}.
\begin{figure}[h!]
\begin{center}
\includegraphics[scale=0.90]{./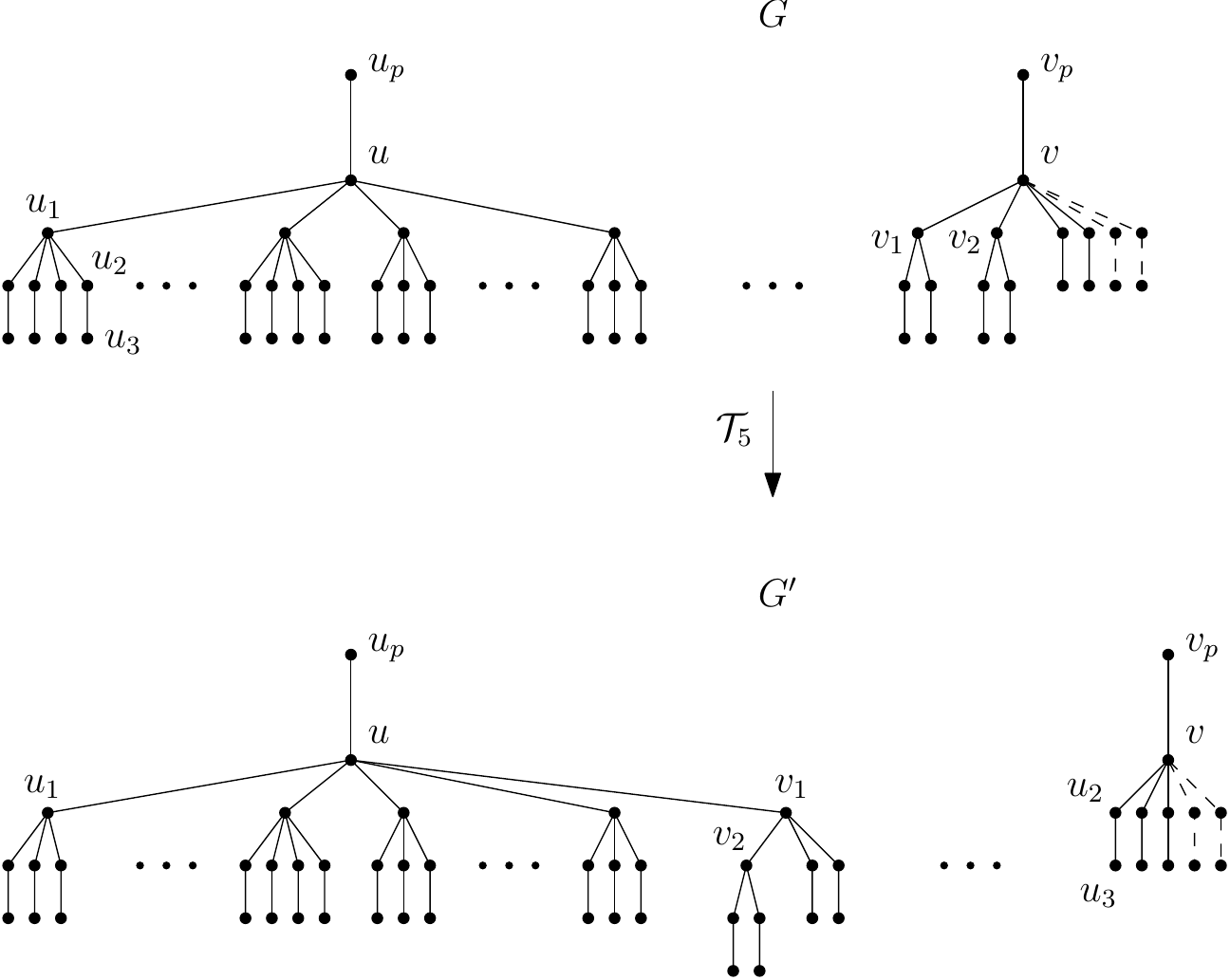}
\caption{Transformation $\mathcal{T}_{5}$ from the proof of Theorem~\ref{thm-noB2B4} -  Case $5$.
}
\label{noB4B2-60-1}
\end{center}
\end{figure}
After applying $\mathcal{T}_{5}$, the degree of the vertex $u$ increases by $1$, the degree of $v_1$ increases from $3$ to
$4$, the degree of the vertex $v$ decreases by $1$, the degree of  $u_1$  decreases from $5$ to $4$, and the
rest of the vertices do not change their degrees.

The change of the ABC index after applying $\mathcal{T}_{5}$ is
\beq \label{eq-te1-80}
ABC(G') - ABC(G) & = & \sum_{i=1}^{d(u)-1} (-f(d(x_i),d(u))+f(d(x_i),d(u)+1))  \nonumber \\
&& -f(5,d(u))+f(4,d(u)+1)  \nonumber  \\
&& -f(3,d(v))+f(4,3)   \nonumber  \\
&&-f(3,d(v))+f(4,d(u)+1) \nonumber  \\
&& -f(d(v_p),d(v))+f(d(v_p),d(v)-1), \eeq
where $x_i$, for $i=1, \dots, d(u)-1$, are the neighbors of $u$ in $G$,
except $u_1$.
Similar technique in Case~$1$ shows that
\beq \label{eq-te1-80-1}
ABC(G') - ABC(G) & < & -\sqrt{\frac{1}{5}} + \sqrt{\frac{1}{4}}   \nonumber  \\
&& -f(3,d(v))+f(4,3) \nonumber  \\
&&-f(3,d(v))+f(4,d(u)+1) \nonumber  \\
&& -\sqrt{\frac{1}{d(v)}}+\sqrt{\frac{1}{d(v)-1}}.
\eeq
Clearly, the
right-hand side of (\ref{eq-te1-80-1}) decreases in $d(u)$, so the
fact that the change of the ABC index being negative follows from
direct calculation, except the cases where
\begin{itemize}
\item
$d(v) = 5$ and $13 \le d (u) \le 34$;
\item
$d(v) = 6$ and  $25 \le d (u) \le 48$;
\item
$d(v) = 7$ and  $67 \le d (u) \le 83$.
\end{itemize}

For the above cases, we need only to analyze in
(\ref{eq-te1-80}) the term
$$
\sum_{i=1}^{d(u)-1} (-f(d(x_i),d(u))+f(d(x_i),d(u)+1)).
$$
Note that, from Lemma \ref{appendix-pro-010-2},
$-f(d(x_i),d(u))+f(d(x_i),d(u)+1)$ decreases in $d(x_i)$,
and from Lemmas \ref{lemma-switching} and \ref{lemma-10}(b), the possible
minimum degree among all the neighbors of $u$ in $G$, different from $u_1$
and $u_p$, is $4$. Therefore, we have
\beq \label{eq-te1-80-50}
&& \sum_{i=1}^{d(u)-1} (-f(d(x_i),d(u))+f(d(x_i),d(u)+1))  \nonumber  \\
&<&  \quad  (d(u)-2) (-f(4,d(u))+f(4,d(u)+1)).   \nonumber
\eeq
Now, we obtain
\beq
ABC(G') - ABC(G) & < & (d(u)-2) (-f(4,d(u))+f(4,d(u)+1)) \nonumber  \\
&&-\sqrt{\frac{1}{5}} + \sqrt{\frac{1}{4}}\nonumber \\
&&  -f(3,d(v))+f(4,3)  \nonumber  \\
&&-f(3,d(v))+f(4,d(u)+1)  \nonumber  \\
&& -\sqrt{\frac{1}{d(v)}}+\sqrt{\frac{1}{d(v)-1}}.   \nonumber \eeq
Subsequently a negative upper bound follows from direct calculation.


The proofs for cases (b), (c) and (d) are similar to that in case
(a), and the detailed illustration can be referred to that in Case $1$.

\bigskip
\noindent {\it Case $6$}. $n_2=1$.

First we consider case (a), i.e., $u, v, u_p$ and
$v_p$ are pairwise distinct vertices.
By Theorem~\ref{thm-30}  $n_1 \leq 4$. Thus,  there are two possible configurations in this case:   $d(v)=6, 5$ and  $n_1=4,3$, correspondingly.
Here, we apply the transformation
$\mathcal{T}_{6}$ depicted in Figure~\ref{noB4B2-70-2}.
\begin{figure}[h!]
\begin{center}
\includegraphics[scale=0.90]{./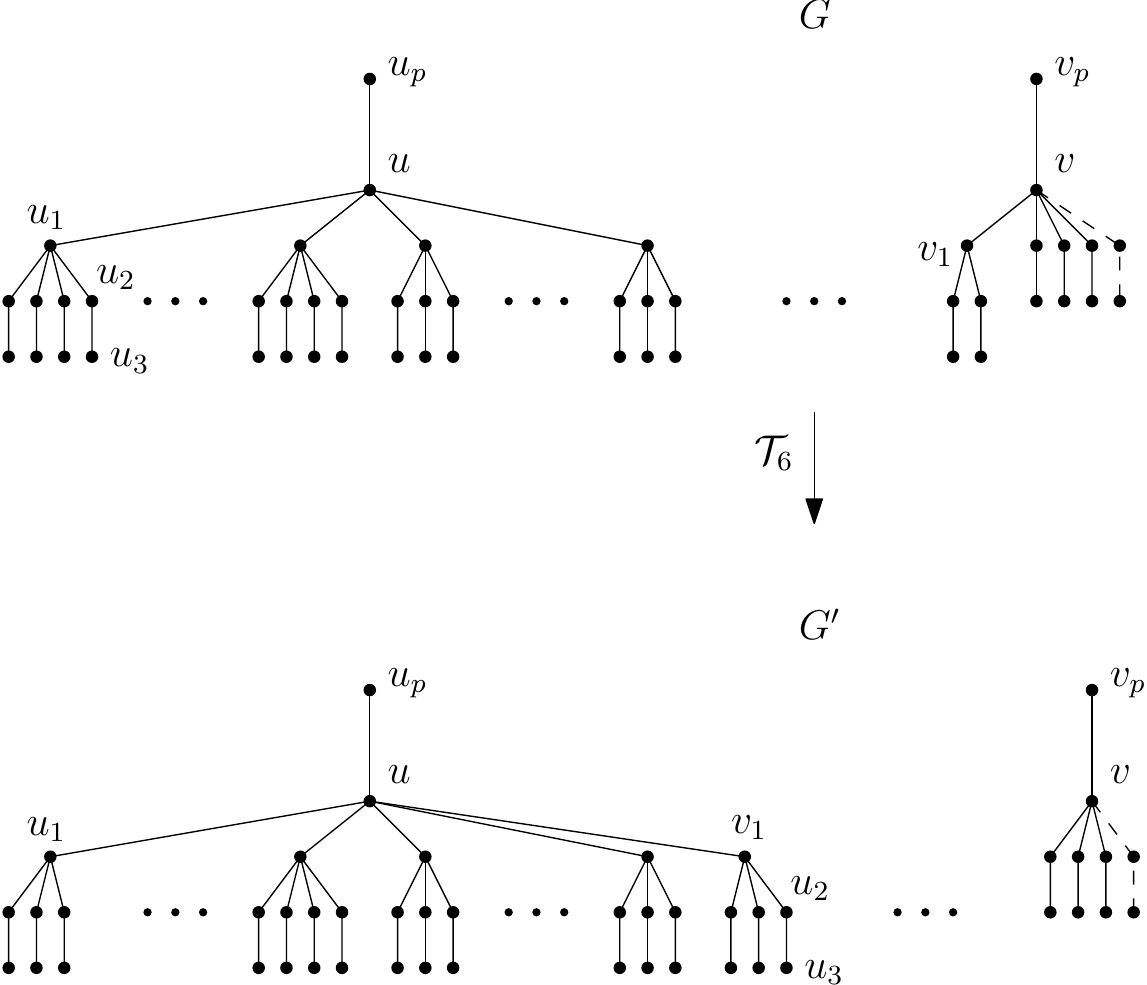}
\caption{Transformation $\mathcal{T}_{6}$ from the proof of Theorem~\ref{thm-noB2B4} -  Case $6$.
}
\label{noB4B2-70-2}
\end{center}
\end{figure}
After applying $\mathcal{T}_{6}$, the degree of the vertex $u$
increases by $1$, the degree of $v_1$ increases from $3$ to
$4$, the degree of the vertex $v$ decreases by $1$, the degree of  $u_1$  decreases from $5$ to $4$, and the
rest of the vertices do not change their degrees.

The change of the ABC index after applying $\mathcal{T}_{6}$ is
\beq \label{eq-te1-100}
ABC(G') - ABC(G) & = & \sum_{i=1}^{d(u)-1} (-f(d(x_i),d(u))+f(d(x_i),d(u)+1))  \nonumber \\
&& -f(5,d(u))+f(4,d(u)+1)  \nonumber  \\
&&-f(3,d(v))+f(4,d(u)+1) \nonumber  \\
&& -f(d(v_p),d(v))+f(d(v_p),d(v)-1), \eeq
where $x_i$, for $i=1, \dots, d(u)-1$, are the neighbors of $u$ in $G$
different from $u_1$.
A similar analysis as in Case~$1$ shows that
\beq \label{eq-te1-100-1}
ABC(G') - ABC(G) &<& -\sqrt{\frac{1}{5}}+\sqrt{\frac{1}{4}}   \nonumber  \\
&& -f(3,d(v))+f(4,d(u)+1)  \nonumber  \\
&& -\sqrt{\frac{1}{d(v)}}+\sqrt{\frac{1}{d(v)-1}}.
\eeq
Clearly, the
right-hand side of (\ref{eq-te1-100-1}) decreases in $d(u)$, so we
can get a negative upper bound through direct calculation, except for the
case when $d (v) = 5$ and $13 \le d(u) \le 17$.

For the remaining cases, we need only to consider the term
(\ref{eq-te1-100}), which is as follows
$$
\sum_{i=1}^{d(u)-1} (-f(d(x_i),d(u))+f(d(x_i),d(u)+1)).
$$
Note that, from Lemma \ref{appendix-pro-010-2},
$-f(d(x_i),d(u))+f(d(x_i),d(u)+1)$ decreases in $d(x_i)$,
and from Lemmas \ref{lemma-switching} and \ref{lemma-10}(b), the possible
minimum degree among all the neighbors of $u$ in $G$, different from $u_1$
and $u_p$, is $4$, thus
\beq
&&\sum_{i=1}^{d(u)-1} (-f(d(x_i),d(u))+f(d(x_i),d(u)+1)) \nonumber \\
&<& (d(u)-2) (-f(4,d(u))+f(4,d(u)+1)). \nonumber
\eeq
Now what follows is that
\beq
ABC(G') - ABC(G) &<& (d(u)-2) (-f(4,d(u))+f(4,d(u)+1)) \nonumber  \\
&&-\sqrt{\frac{1}{5}}+\sqrt{\frac{1}{4}} \nonumber  \\
&& -f(3,d(v))+f(4,d(u)+1) \nonumber  \\
&& -\sqrt{\frac{1}{d(v)}}+\sqrt{\frac{1}{d(v)-1}}.   \nonumber \eeq
Subsequently a negative upper bound follows from direct calculation.

The proofs for cases (b), (c) and (d) are similar to that in case
(a), and as before the detailed illustration can be referred to that in Case $1$.

By combining the above six cases, the claim of the theorem is finally obtained.
\end{proof}

\section{Trees containing simultaneously $B_4$- and  $B_1$-branches}\label{sec-B4B1}

In the final section we prove that a minimal-ABC tree does not contain a
$B_4$-branch and a $B_1$-branch in simultaneity.

\begin{te} \label{thm-noB1B4}
A minimal-ABC tree cannot contain a $B_4$-branch and a $B_1$-branch
simultaneously.
\end{te}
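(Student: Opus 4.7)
The plan is to mirror the strategy used in the proof of Theorem~\ref{thm-noB2B4}. Assume for contradiction that $G$ is a minimal-ABC tree containing both a $B_4$-branch centered at $u_1$ with parent $u$ and a $B_1$-branch centered at $v_1$ with parent $v$, and choose $v$ to be the last parent of a $B_1$-branch in the breadth-first search of $G$. Lemma~\ref{lemma-10}(a) gives $u\neq v$, and Observation~\ref{obs-DS-branches1} together with Lemma~\ref{lemma-switching} yields $d(u)\geq d(v)$. Theorem~\ref{thm-noB2B4}, just proved, rules out any $B_2$-branches in $G$; Theorem~\ref{te-no5branches-10} rules out $B_k$-branches with $k\geq 5$; and Lemma~\ref{lemma-10}(a) prevents a $B_4$-branch from being attached to $v$. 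Arguing exactly as at the corresponding step in the proof of Theorem~\ref{thm-noB2B4}, every branch attached to $v$ is either a $B_1$- or a $B_3$-branch.

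I would next establish two preliminary propositions in the spirit of Propositions~\ref{pro-B4-B2} and~\ref{pro-B3-B3}. The core transformation takes a single pendant path of length $2$ off $u_1$ (converting the $B_4$-branch into a $B_3$-branch) and re-attaches a new pendant path of length $2$ at $v$ (incrementing $d(v)$ by one and giving $v$ one extra $B_1$-child). The resulting change in $\ABC$ is a short sum of $f$-differences; by Lemmas~\ref{appendix-pro-010} and~\ref{appendix-pro-010-2} it is monotone in $d(u)$ and in $d(v)$, so it is strictly negative for all $(d(u),d(v))$ outside a small finite window. The inverse transformation then rules out $B_3$-branches at $v$ in the remaining window: if $v$ hosted such a branch, the same kind of case analysis as in Theorem~\ref{thm-noB2B4} would produce either a configuration forbidden by Theorem~\ref{thm-20} or a pair of $B_3$-branches at distinct parents violating the $B_4$+$B_1$ analog of Proposition~\ref{pro-B3-B3}.

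With this reduction in place, every branch at $v$ is a $B_1$-branch, so Theorem~\ref{thm-30} gives $n_1\in\{1,2,3,4\}$ and $d(v)=n_1+1\in\{3,4,5\}$. The main case analysis then splits according to $n_1$. For each value I would design a local transformation in the spirit of $\mathcal{T}_1,\dots,\mathcal{T}_6$ that extracts a pendant path of length $2$ from the $B_4$-branch at $u_1$ and rearranges the cluster of $B_1$-branches around $v$, absorbing at least one of them into a new structure. The change of the index then takes the form
\[
\ABC(G')-\ABC(G)=\sum_{i}\bigl(-f(d(x_i),d(u))+f(d(x_i),d(u)+\Delta)\bigr)+(\text{local terms}),
\]
with the $x_i$ ranging over the remaining neighbors of $u$. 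The local terms are monotone in $d(u)$ and $d(v_p)$ by Lemmas~\ref{appendix-pro-010} and~\ref{appendix-pro-010-2}, yielding a negative closed-form upper bound for all $(d(u),d(v))$ outside a finite exceptional set. For those exceptional pairs the sum above is bounded using the refinement
\[
\sum_{i}\bigl(-f(d(x_i),d(u))+f(d(x_i),d(u)+\Delta)\bigr)< (d(u)-2)\bigl(-f(4,d(u))+f(4,d(u)+\Delta)\bigr),
\]
justified by Lemmas~\ref{lemma-switching} and~\ref{lemma-10}(b), since every neighbor of $u$ other than $u_1$ and $u_p$ has degree at least $4$. The four configurations (a)--(d) describing coincidences among $u,v,u_p,v_p$ are then treated exactly as in Case~1 of the previous theorem.

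The main obstacle is that a $B_1$-branch is structurally much thinner than a $B_2$-branch: the available transformations disturb fewer edges and produce a correspondingly smaller (in absolute value) drop in $\ABC$. This shifts the finite exceptional window in $(d(u),d(v))$ further up, and in the subcases $n_1=1$ and $n_1=2$ the direct upper bound is barely negative; closing these cases demands both the refined neighbor-degree estimate above and careful numerical verification for the remaining short list of admissible pairs.
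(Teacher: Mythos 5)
Your reduction step is where the argument breaks. You propose to rule out $B_3$-branches at $v$ (via an analog of Proposition~\ref{pro-B3-B3} plus an inverse transformation) and then run a case analysis in which \emph{every} branch at $v$ is a $B_1$-branch, with $d(v)=n_1+1\in\{3,4,5\}$. That configuration cannot occur: by the paper's definition, a $B_1$-branch is a pendant path of length $2$ attached to a vertex having at least one child of degree at least $3$, so the parent $v$ of a $B_1$-branch is forced to have such a child; after the exclusions you correctly list ($B_2$ by Theorem~\ref{thm-noB2B4}, $B_4$ by Lemma~\ref{lemma-10}(a), $B_{\ge 5}$ by Theorem~\ref{te-no5branches-10}, $B_1^*$ by Lemma~\ref{lemma-10-1}), that child must be the center of a $B_3$-branch. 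Hence $B_3$-branches at $v$ cannot be eliminated — they are mandatory — and a vertex all of whose children are degree-$2$ path vertices would simply be the center of a $B_{n_1}$-branch, not a parent of $B_1$-branches. Your range $d(v)\in\{3,4,5\}$ also conflicts with the switching inequality $d(v)\ge d(u_1)=5$. So the main case analysis you outline addresses a vacuous situation, and the cases that actually have to be handled (where $v$ carries both $B_1$- and $B_3$-branches) are exactly the ones your plan discards.

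The paper's proof goes the opposite way: it \emph{uses} the forced presence of $B_3$-branches at $v$. The single transformation $\mathcal{T}_7$ moves all $d(v)-n_1-1$ $B_3$-branches from $v$ to $u$ and transfers one pendant path of length $2$ from $u_1$ to $v$ (the edge incident to the degree-$2$ vertex contributes $f(2,\cdot)=\sqrt{1/2}$ regardless, so it cancels). The resulting change of the index is
\[
\sum_{i}\bigl(-f(d(x_i),d(u))+f(d(x_i),d(u)+d(v)-n_1-1)\bigr)
-f(5,d(u))+f(4,d(u)+d(v)-n_1-1)
\]
\[
+(d(v)-n_1-1)\bigl(-f(4,d(v))+f(4,d(u)+d(v)-n_1-1)\bigr)
-f(d(v_p),d(v))+f(d(v_p),n_1+2),
\]
which is bounded above by a quantity decreasing in $d(u)$ and is shown negative by direct calculation, with cases (b)--(d) handled as in Theorem~\ref{thm-noB2B4}; no split on $n_1$ and no auxiliary propositions are needed. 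To salvage your write-up you would have to abandon the ``only $B_1$-branches at $v$'' reduction and instead design transformations for the genuinely occurring mixed $B_1$/$B_3$ configurations — at which point you essentially arrive at $\mathcal{T}_7$.
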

\begin{proof}

\noindent Let $u_1$ be the center vertex of a $B_4$-branch and $v_1$
the center vertex of a $B_1$-branch in a minimal-ABC tree $G$. By
Lemma~\ref{lemma-10}(a), $u_1$ and $v_1$ cannot have a common parent
vertex.
So, let $u$ be the parent of $u_1$, and $u_p$ the parent of $u$ if
$u$ is not the root vertex of $G$. Similarly, let $v$ be the parent
of $v_1$, and $v_p$ the parent of $v$.

On one hand, by Theorem~\ref{thm-noB2B4} and
Lemma~\ref{lemma-10}(a), neither $B_2$- nor $B_4$-branch can be
attached to $v$. On the other hand, since $v$ has $B_1$-branches as children, thus $v$ has a child of degree at least $3$, i.e.,
there must exist some $B_3$-branches attached to $v$, which also
implies that no $B_1^*$-branch can be attached to $v$ from Lemma
\ref{lemma-10-1}. In conclusion, the branches attached to $v$ can
only be $B_1$- or $B_3$-branches.
Here  also, by Observation \ref{obs-DS-branches1}, we may assume that $u_1$ occurs before $v_1$ in the breadth-first search of $G$.
Furthermore, by Lemma~\ref{lemma-switching},
it follows that $d(u)\geq d(v) \geq d(u_1) \geq d(v_1)$.

First, we consider case (a), i.e., when the vertices $u, v, u_p$
and $v_p$ are pairwise distinct. Denote by $n_1$ the number of
$B_1$-branches attached to $v$ in $G$. Let us consider the
transformation $\mathcal{T}_{7}$ depicted in Figure~\ref{noB4B1-10}.
\begin{figure}[h!]
\begin{center}
\includegraphics[scale=0.90]{./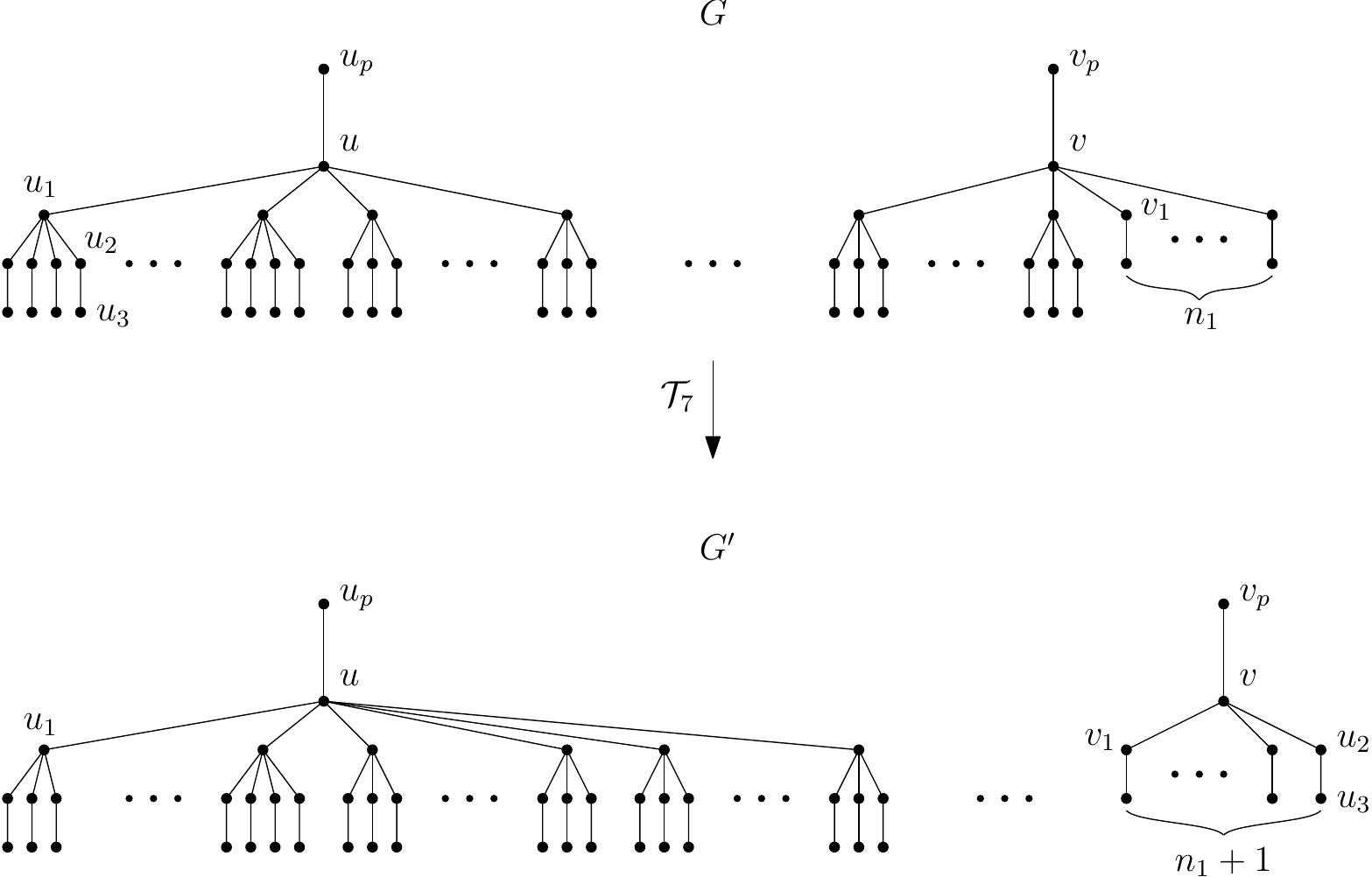}
\caption{Transformation $\mathcal{T}_{7}$ from the proof of
Theorem~\ref{thm-noB1B4}.} \label{noB4B1-10}
\end{center}
\end{figure}
After applying $\mathcal{T}_{7}$, the degree of $u$ increases by
$d(v)-n_1-1$, the degree of $v$ decreases by $d(v)-n_1-2$, while the
degree of  $u_1$  decreases by $1$, and the rest of the vertices do
not change their degrees.

The change of the ABC index after applying
$\mathcal{T}_{7}$ is
\beq \label{eq-te1-110}
ABC(G') - ABC(G) &=& \sum_{i=1}^{d(u)-1} (-f(d(x_i),d(u))+f(d(x_i),d(u)+d(v)-n_1-1)) \nonumber \\
&&-f(5,d(u))+f(4,d(u)+d(v)-n_1-1)  \nonumber  \\
&& +(d(v)-n_1-1)(-f(4,d(v))+f(4,d(u)+d(v)-n_1-1))  \nonumber  \\
&& -f(d(v_p),d(v))+f(d(v_p),n_1+2),  \nonumber \eeq
where $x_i$, for $i=1, \dots, d(u)-1$, are all the neighbors of $u$ in
$G$, with the exception of $u_1$.

We remark that $d(v) \ge n_1+2 > n_1 + 1$, since $v$ has
$B_1$-branches. Then, as previously, we obtain
\beq
\label{eq-te1-110-1} ABC(G') -
ABC(G) &<&
-\sqrt{\frac{1}{5}}+\sqrt{\frac{1}{4}}\nonumber \\
&& +(d(v)-n_1-1)(-f(4,d(v))+f(4,d(u)+d(v)-n_1-1))\nonumber \\
&& -\sqrt{\frac{1}{d(v)}}+\sqrt{\frac{1}{n_1+2}}.
\eeq
Clearly, the
right-hand side of (\ref{eq-te1-110-1}) decreases in $d(u)$, so we
can obtain a negative upper bound through direct calculation procedure.
Finally, the proofs for the remaining cases of (b), (c), and (d) are
very similar to this one.
\end{proof}

\section[Acknowledgment]{Acknowledgment}
The authors would like to thank the anonymous reviewers for their insightful comments.
The authors are especially  indebted to the reviewer
whose observations also aided in the shortening the proof of Theorem~\ref{thm-noB2B4}.

%
%

\end{document}